\theoremstyle{plain}
\newtheorem{theorem}{Theorem}[section]
\newtheorem{algorithm}[theorem]{Algorithm}
\newtheorem{lemma}[theorem]{Lemma}
\newtheorem{proposition}[theorem]{Proposition}
\newtheorem{definition}[theorem]{Definition}
\theoremstyle{remark}
\newtheorem{remark}[theorem]{Remark}
\newtheorem{example}[theorem]{Example}
\numberwithin{equation}{section}
\newcommand{\ind}{1\!\kern-1pt \mathrm{I}}
\newcommand{\rsto}{]\!\kern-1.8pt ]}
\newcommand{\lsto}{[\!\kern-1.7pt [}
\numberwithin{equation}{section}
\newcommand{\RR}{\mathbb{R}}
\newcommand{\QQ}{\mathbb{Q}}
\newcommand{\EE}{\mathbb{E}}
\newcommand{\Esign}[1]{\mathbb{E}\left[ #1 \right]}   
\newcommand{\Econd}[2]{\mathbb{E}\left[\left.#1\right|#2\right]}        
\newcommand{\im}{\ensuremath{\mathsf{i}}}
\begin{document}
\title[Discrete time Term structure theory]{Discrete time Term structure theory and Consistent Recalibration Models}
\author{Anja Richter and Josef Teichmann}
\address{Baruch College, CUNY, 1 Bernard Baruch Way, New York, NY 10010, USA \\ ETH Z\"urich, D-Math, R\"amistrasse 101, CH-8092 Z\"urich, Switzerland}
\email{anja.richter@baruch.cuny.edu \\ jteichma@math.ethz.ch}
\thanks{The authors gratefully acknowledge the support from
ETH-foundation. Both authors are grateful for valuable comments from Ozan Akdogan, Philipp Harms and David Stefanovits. The second author is grateful for the hospitality of Chebychev Laboratory in St.~Petersburg, where a first version of this paper was presented, in particular for the valuable comments of Yana Belopolskaya and Elena Shmileva.}
\curraddr{}
\begin{abstract}
We develop theory and applications of forward characteristic processes in discrete time following a seminal paper of Jan Kallsen and Paul Kr\"uhner \cite{kk10}. Particular emphasis is placed on the dynamics of volatility surfaces which can be easily formulated and implemented from the chosen discrete point of view. In mathematical terms we provide an algorithmic answer to the following question: describe a rich, still tractable class of discrete time stochastic processes, whose marginal distributions are given at initial time and which are free of arbitrage. In terms of mathematical finance we can construct models with pre-described (implied) volatility surface and quite general volatility surface dynamics. In terms of the works of Rene Carmona and Sergey Nadtochiy \cite{cn09,cn12}, we analyze the dynamics of tangent affine models. We believe that the discrete approach due to its technical simplicity will be important in term structure modeling.
\end{abstract}

\keywords{forward characteristic process, volatility surface, calibration, finite dimensional realization, affine process}
\subjclass[2010]{}
\date{\today}
\maketitle

\section{Introduction}

Model choice in finance is often done by calibrating model parameters (and initial values) to liquid derivatives' prices from today's market. This practice relies on the ad hoc class from which the model is chosen and on some selection criterion, i.e.~the solution of an inverse problem with respect to today's market prices. As a consequence the result comes with two labels: the ad hoc choice of the calibration procedure (model class and selection criterion) and the (sensitive) dependence on today's data. The difficulty with this approach is twofold: as soon as we enlarge the model class, important even well observable model properties might change, and second, when we \emph{re-calibrate} the model tomorrow the calibrated set of parameters might change. The first problem relates to non-robustness of the calibration procedure, the second to its inconsistency.


\medskip

Let us develop some terminology first: we shall always take a classical point of view, i.e.~we believe that discounted market prices (these might be an infinite vector, too) follow some martingale process $ {(S_t)}_{0 \leq t \leq T} $ over some sufficiently long time interval $ [0,T] $ (which should once and for all be seen as a finite set of discrete time points) with respect to one pricing measure $ \QQ $. When we consider derivatives, we consider a large financial market in which trading does not lead to arbitrages, whence the existence of a an equivalent martingale measure $ \QQ $, see \cite{fs04}. A model is a time-inhomogeneous Markov process $ {(X_t)}_{0 \leq t \leq T} $ together with some known deterministic function $ G $ such that $ S_t = G(t,X_t) $ for $ t \in [0,T] $. A model (class) comes with two sets of unknowns: first the current state $ X_0 $ of the model and second the model parameters determining its generator $ \mathcal{A} $. Very often current state and model parameters are addressed together as model parameters, but we carefully distinguish these entities: the stochastic movement of state variables $ X_t $ is a model feature, whereas the generator $ \mathcal{A} $ is a model constant. Apparently making the state space very high-dimensional helps to incorporate inconsistencies over time, but usually results in highly delicate calibration problems. This leads in, e.g.~interest rate theory, to the well-known HJM-theory, where consistency is a built-in model feature. On the other hand low-dimensional state spaces help to keep the calibration procedure feasible, stable and robust, but lead to model inconsistencies over time. This second approach corresponds to choosing local volatility or time-dependent L\'evy models as model class, where the state space is finally one-dimensional. By contrast in the HJM-inspired approach as for example in dynamic volatility surface (see \cite{sw08a}, \cite{sw08b}), local volatility (see \cite{cn09}) or tangent L\'evy models (see \cite{cn12}), the previously specified class of generators becomes a state variable, too.

\medskip

In this work we present a more balanced approach between both extremes, which relies on a careful re-visit of Hull-White extensions in a quite general setting. In words of interest rate theory: we want to keep model consistency in the sense of HJM-equations, but we also want to keep tractability in the sense of finite factor models. In order not to lose our readers' patience through tedious technical calculations we present a self-contained version of this theory in a discrete time setting. One can clearly see how the well-known difficulties of consistent dynamic volatility surface evolutions evaporate in the light of discrete time modeling. The first idea is standard: we set up an infinite dimensional state space which encodes every initial term structure by construction. As codebook we choose all possible distributional configurations without any restriction, in contrast to models where local volatility or tangent L\'evy processes are considered. The second idea is non-standard: we foliate the state space appropriately to obtain a balance between still rich re-calibration properties and satisfying dynamical properties. We outline this idea in the following paragraphs through a continuous time example borrowed from interest rate theory:

\medskip

Let $ {(R_t)}_{t \geq 0} $ be a Vasi\v{c}ek model for the short rate, i.e.
\[
dR_t = (b - a R_t) dt + \sigma dW_t \, , \; R_0 \in \mathbb{R} \, .
\]
Bond prices, which correspond to derivatives in this setting can be easily calculated, but today's (generic) bond prices $ T \mapsto P(0,T) $ can possibly \emph{not} be perfectly calibrated by the initial value $ R_0 $ and the three remaining model parameters $ a $, $b$, and $ \sigma $. The idea of the Hull-White extension is to introduce one time-dependent parameter in order to achieve perfect calibration, hence the following time-dependent Vasi\v{c}ek model
\[
dR_t=(b-aR_t) dt + \sigma dW_t + c(t) dt\, , \; R_0 \in \mathbb{R} \, 
\]
is suggested to replace the first time-homogeneous one. In this case we can calculate from $ T \mapsto P(0,T) $ -- under mild regularity assumptions and given parameters $ a $, $b$, $ \sigma $ and an initial value $ R_0 $ -- the functional form of $ t \mapsto c(t) $. However, the drawback is inconsistency over time, since tomorrow's re-calibration might lead to other model parameters $a$, $b$, $\sigma$ and in particular to another function $ t \mapsto c(t)$.

Looking again at the procedure of Hull-White extensions we actually see a two step methodology: first $ a $, $b$, $ \sigma $ and $ R_0 $ are fixed and then $ t \mapsto c(t) $ is calculated from initial values: in other words, in a first calibration step we approximately explain today's bond prices by choosing $ R_0 $ and model parameters $ a $, $ \sigma $ \emph{and a constant} $ b $. This will lead to quite poor results in calibration generically. In a second calibration step we choose a curve $ t \mapsto c(t) $ such that the model
\[
dR_t=(b-aR_t) dt + \sigma dW_t + c(t) dt\, , \; R_0 \in \mathbb{R} \, 
\]
explains the bond prices perfectly. This is by the previous considerations possible and leads to an operator $ \operatorname{C} $ mapping bond prices $ T \mapsto P(0,T) $, model parameters $ a $, $b $, $\sigma$ and state values $ R_0 $ to a curve $ t \mapsto c(t) $. There is an apparent redundancy in this procedure since $ b $ and $ t \mapsto c(t) $ have overlapping effects. We shall take an advantage of this redundancy: we can imagine a setting where in fact the parameters $ a $, $b$ and $ \sigma $ are stochastic, and the Hull-White extension is \emph{introduced} to guarantee time consistency.

Let us explain this in more detail: we allow now for stochastic changes of the parameters $ a $, $b$, $ \sigma $: consider an exogenously given process ${(\mathbf{a}_t)}_{t \geq 0} ={(a(t),b(t),\sigma(t))}_{t \geq 0} $, then we can in principle make sense of
\[
dR_t=(b-a(t) R_t) dt + \sigma(t) dW_t + \operatorname{C}(P^{\mathrm{tangent}}(t,.);a(t),b(t),R(t))(t) dt \, , \; R_0 \in \mathbb{R} \, , 
\]
and the initial bond prices are $ T \mapsto P(0,T) $. Here the notation $ P^{\mathrm{tangent}}(t,.) $ means the Vasi\v{c}ek bond prices at time $ t $ with parameters $ (a(t),b(t),\sigma(t)) $, an initial value $ R_t $ and Hull-White extension curve 
\[
s \mapsto c(s) = \operatorname{C}(P^{\mathrm{tangent}}(t,.);a(t),b(t),R(t))(s) \, , 
\]
for $ s \geq t $. From an analytic point of view the equation looks unclear, but we can quite easily imagine how one step of a splitting scheme of this equation would look like: we start with a (appropriately) small time step $ \Delta $, initial parameters $ a(0), b(0), \sigma(0) $ and $ R_0 $, an initial curve $ T \mapsto P(0,T) $ and a corresponding Hull-White extension 
$$ 
s \mapsto c(s) := \operatorname{C}(P^{\mathrm{tangent}}(0,.);a(0),b(0),R(0))(s)
$$
for $ s \geq 0 $. Next we simulate one Euler step into the future leading to $ R_\Delta $, i.e.
\[
R_\Delta = R_0 + (b(0)-a(0)R_0) \Delta + \sigma(0) dW_\Delta + c(0) \Delta
\] 
We then calculate the bond prices $ P^{\mathrm{tangent}}(\Delta,.) $ corresponding to $ a(0), b(0), \sigma(0) $, $ R_\Delta $, and $ s \mapsto c(s+\Delta) $. Now the second part of the splitting scheme starts: we simulate values $ a(\Delta) $, $b(\Delta)$, $ \sigma(\Delta) $ and calculate a new Hull-White extension $\tilde c := C(P^{\mathrm{tangent}}(\Delta,\cdot),a(\Delta),b(\Delta),R_\Delta)$.
With the resulting curve $ s \mapsto \tilde c(s)$ we restart the splitting step. Notice that this equation is \emph{not} an SDE in the usual sense, since $ \operatorname{C} $ is defined on the space of bond prices, respectively forward rates. It rather reminds a McKean-Vlasov equation by its characteristics at any time $t$ depending on the distribution of $ \int_t^. R_s ds $, for all times $t$. 

The model is initialized such that $ T \mapsto P(0,T) $ is perfectly calibrated, which provides also the initial Hull-White extension. It is clear that under certain mild regularity conditions on the parameter process $ \mathbf{a} $ and the initial term structure $ T \mapsto P(0,T) $ everything is well defined. It remains to understand consistency, i.e.
\[
\Esign{\exp \big( - \int_0^T R_s ds \big) } = P(0,T) \, .
\]
This, however, follows from the fact that
\[
\Econd{\exp \big( - \int_s^{t} R_u du \big) P^{\mathrm{tangent}}(t,T) }{\mathcal{F}_s}=P^{\mathrm{tangent}}(s,T)
\]
for $ 0 \leq s \leq t \leq T $, which seems to be the defining property of tangent term structure $ P^{\mathrm{tangent}}(t,T) $ in its integral description (the differential description being that \[ {(\exp \big( - \int_0^{t} R_u du \big) P^{\mathrm{tangent}}(t,T))}_{0 \leq t \leq T} \] is a martingale). However, the precise formulation of this theorem will be proved in the continuous time version of this paper. A discrete time version of it will be proved in the realm of the paper, see Theorem \ref{main_theorem}.

If -- instead of exogenous specification of $ \mathbf{a} $ -- we rely on day by day re-calibrations it is apparent that we can deal in this setting with changing model parameters without completely sacrificing the Vasi\v{c}ek dynamics. From an infinite dimensional point of view the infinitesimal dynamics still remains of Vasi\v{c}ek type, but we somehow change the type (more precisely the parameters $ a $, $b$ and $ \sigma $) of Vasi\v{c}ek dynamics at every instant.

One might ask what the actual advantage of considering consistent recalibration models instead of HJM type models is. The answer is two-fold: firstly, the increments of consistent recalibration models seen as HJM type models are of a special type, since they come from finite factor models, whence the models look like well-known finite factor models. In terms of numerical mathematics the distributional structure of increments can often be described exactly and need not be approximated on a stochastic basis.

As the previous example makes clear the foliation structure on the space of models, which is called Hull-White extension in the realm of interest rate models, plays a crucial role for this framework. We can now replace the Vasi\v{c}ek model by a general affine model and consider Hull-White extensions as specifications of leaves of foliations on the space of forward rates, see, e.g., \cite{ft04}. We call such models constructed via (generalized) Hull-White extensions \emph{consistent re-calibration models}. It is the purpose of this article to present the discrete time theory of such equations. This is one of the many places in mathematics, where an infinite dimensional view (in case of interest rates this is the HJM equation) on a finite dimensional equation (the short rate equation with re-calibration) helps to understand the theory.

After all these theoretical considerations the main result of this article can be described as follows: we can find stochastic processes $ {(\eta_t)}_{t \geq 0} $ taking values in a vector space of term structures (forward characteristics in the language of Section~\ref{forward-characteristics-section}), which have a linear structure of the type
\[
\eta_t = A_t + \sum_{i=1}^n B^i_t Y^i_t
\]
with respect to some factor driving process $ Y ={(Y_t)}_{t \geq 0} $, together with some underlying process $X$, to whom the term structure belongs (in the sense of forward characteristics of Section~\ref{forward-characteristics-section}). Already with deterministic coefficients $A$ and $B^i$ we can calibrate a sufficiently rich family of initial term structures $\eta_0$. Concatenating such processes consecutively leads to stochastic processes $\eta$ still describing arbitrage free evolutions of term structures, which can now also be re-calibrated according to the chosen modes of concatenation. Due to the linear model structure above, knowledge of a single trajectory of $ Y $ and $\eta$ allows to calculate $ B $ by quadratic variation estimators of $\eta$ and $Y$. Finally $A$ can be inferred by solving linear equations.

\medskip

We concentrate in this work first on the  multi-variate theory of forward characteristics, which will be introduced in Section \ref{forward-characteristics-section}. In Section \ref{applications-forward-characteristics} we outline several applications of forward characteristics in mathematical finance. In Section \ref{forward-characteristics-affine-processes} we show that the large model class of affine processes has particularly simple forward characteristics. In Section \ref{SDE-forward-characteristics} we introduce the corresponding HJM-type equation for forward characteristics and provide a solution theory via the classification of finite dimensional realizations, see Section \ref{fdr}, in Section \ref{crc}. These solutions correspond to consistent re-calibration models.


\section{Forward characteristic processes}\label{forward-characteristics-section}


Let $ (\Omega,\mathcal{F},(\mathcal{F}_t)_{t \geq 0},\QQ) $ denote a filtered probability space with augmented filtration over an infinite \emph{discrete set} of time points, which are denoted by abuse of notation by $t \geq 0$. Usually we think of integer multiples of a time tick, in particular we consider only equidistant time grids. We consider a (multi-variate) stochastic process $ X:={(X_t)}_{t \geq 0} $ taking values in $ \mathbb{R}^{n} $, which we can see, e.g., as logarithms of price processes or as integrated short rate process. All processes are assumed to be adapted, some will be predictable. Throughout the article the expectation under the (pricing) measure $\QQ$ will be denoted by $\EE$ and the scalar product on $\RR^n$ by $\langle \cdot , \cdot \rangle$. Further notation and details on discrete time stochastic finance can be found in \cite{fs04}.

We define first the term structure of marginal distributions of $ X $ in terms of their Fourier transforms.
\begin{definition}\label{forward-char}
Let $ X $ be an adapted stochastic process taking values in $ \mathbb{R}^n $, then a family of stochastic processes $ {\big(\eta_s(u,t)\big)}_{0 \leq s \leq t}$, $ u \in \mathbb{R}^n $, with $ \eta_s(0,t) = 0 $ for all $0 \leq s \leq t$ is called the process of \emph{forward characteristics of $X$} if
\[
\Econd{\exp(\im \langle u, X_t - X_s \rangle)}{\mathcal{F}_s} = \exp \big( \sum_{k=s}^{t-1} \eta_s(u,k) \big)
\]
for $ 0 \leq s \leq t $.
\end{definition}

\begin{remark}
Note that the normalization $ \eta_s(0,t) = 0 $ for all $0 \leq s \leq t$ ensures that the map $ u \mapsto \eta_s(u,t) $ is continuous and uniquely defined through the application of a complex logarithm. If we have additional exponential moments we can extend the definition of forward characteristics to an appropriate strip in $ \mathbb{C}^n $ containing purely imaginary vectors in its interior. Notice also that the definition implicitly assumes that the characteristic functions of increments $ X_t - X_s $ do not vanish, since $ \eta $ is always considered finitely valued. One could generalize here by allowing the value $ - \infty $, but we do not follow this path for the sake of simplicity.
\end{remark}

Forward characteristics generalize the concept of process characteristics, which -- in this discrete setting -- would simply correspond to the logarithm of the characteristic function of the increment $ X_{s+1} - X_s $, i.e.~the short end of the forward characteristic $ \eta_s(.,s) $, for $ s \geq 0 $.
\begin{definition}\label{process-char}
Let $ X $ be an adapted stochastic process taking values in $ \mathbb{R}^n $, then the family of adapted stochastic processes $ {(\kappa^X_s(u))}_{s \geq 0} $, $u \in \RR^n$, defined by
\[
\Econd{\exp \left(\im \langle u, X_{s+1} - X_s \rangle \right)}{\mathcal{F}_s} = \exp \big( \kappa^X_s(u) \big)
\]
and $ \kappa^X_s(0) = 0 $ for $ s \geq 0 $ is called \emph{(process) characteristic of $ X $}. 
\end{definition}

\begin{remark}
The process characteristic $(\kappa^X_s(u))_{s \geq 0}$, $u \in \RR^n$, of $X$ is uniquely defined through the property that the adapted stochastic process
\[
\left( \exp ( \im \langle u, X_s \rangle - \sum_{k=0}^{s-1} \kappa^X_k(u) )\right)_{s \geq 0}
\]
is a martingale for $u \in \RR^n$ and $ \kappa^X_s(0) = 0 $ for $ s \geq 0 $.
\end{remark}

Not every process qualifies as process of forward characteristics: we can prove the following consistency result, which characterizes forward characteristic processes. In order to state the result we shall assume a certain decomposition of the process $ \eta $ with respect to an additionally given process $ \varepsilon := {(\varepsilon_t)}_{t \geq 0} $, compare also \cite{kk10}.

\begin{definition}
We say that a complex-valued processes $ {(\eta_s(u,t))}_{0 \leq s \leq t}$ for $ u \in \mathbb{R}^n $ has a \emph{decomposition with respect to the $ \mathbb{R}^d $-valued process $ {(\varepsilon_t)}_{t \geq 0} $} if there are complex-valued, adapted processes $ {(\alpha_s(u,t))}_{0 \leq s < t}$ and $ {(\sigma^i_s(u,t))}_{0 \leq s < t}$ for $ u \in \mathbb{R}^n $, $ i=1,\ldots,d$, with 
\begin{equation}\label{normalization}
\alpha_t(u,t)=0 \, , \; \sigma^i_t(u,t) = 0
\end{equation}
for $ t \geq 0 $, $u \in \RR^n$ and $ i = 1,\ldots, d $
and such that
\begin{equation}\label{decomposition}
\eta_{s+1}(u,t)-\eta_s(u,t) = \alpha_s(u,t) + \sum_{i=1}^d \sigma^i_s(u,t) \Delta \varepsilon_s^i
\end{equation}
for $ 0 \leq s < t $ and $ u \in \mathbb{R}^n $. Here $ \Delta \varepsilon_t := \varepsilon_{t+1} - \varepsilon_t $, for $ t \geq 0 $.
\end{definition}

\begin{remark}\label{predictability-remark}
The normalization \eqref{normalization} is introduced to avoid the notion of predictability and to allow for simpler formulas. Note that these values can be chosen freely since they do not enter equation \eqref{decomposition}.
\end{remark}
\begin{proposition}\label{consistency}
Let $ X $ be an adapted stochastic process with values in $\mathbb R^n$ and let $ {(\eta_s(u,t))}_{0 \leq s \leq t}$ be a complex-valued family of adapted processes for $ u \in \mathbb{R}^n $ such that a decomposition \eqref{decomposition} exists with respect to a process $ \varepsilon $. Then $ \eta $ is the process of forward characteristics of $ X $ if and only if the \emph{short end condition} on the process characteristic of $ X $
\begin{equation}\label{short-end-condition}
\kappa^X_s(u)= \eta_s(u,s)
\end{equation}
for $ s \geq 0 $ and $ u \in \mathbb{R}^n $, and the \emph{drift condition}
\begin{equation}\label{drift-condition}
\kappa^X_s(u) - \sum_{k=s}^{t-1} \alpha_s(u,k) = \kappa^{(X,\varepsilon)}_s(u, - \im \sum_{k=s}^{t-1} \sigma^._s(u,k))
\end{equation}
for $ 0 \leq s \leq t $ and $u \in \RR^n$ hold true.
\end{proposition}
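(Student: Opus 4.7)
The proof will split naturally into the two implications, both driven by the same telescoping computation based on the tower property.

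First I would dispose of the short end condition. Setting $t=s+1$ in Definition \ref{forward-char}, the sum on the right reduces to the single term $\eta_s(u,s)$, and comparing with Definition \ref{process-char} of $\kappa^X_s(u)$ yields \eqref{short-end-condition} immediately. This step is independent of the decomposition and handles the ``boundary'' piece of the drift condition (which automatically holds when $t=s$ by the normalization $\alpha_s(u,s)=0=\sigma^i_s(u,s)$).

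The core of the argument for the drift condition is the tower property identity
\[
\Econd{e^{\im \langle u, X_t-X_s\rangle}}{\mathcal{F}_s} = \Econd{e^{\im \langle u, X_{s+1}-X_s\rangle}\,\Econd{e^{\im \langle u, X_t-X_{s+1}\rangle}}{\mathcal{F}_{s+1}}}{\mathcal{F}_s}.
\]
Assuming $\eta$ is the forward characteristic, I replace both conditional expectations by the corresponding exponentials of $\eta$-sums, split off the $k=s$ term on the outer left-hand side via the short end condition, and rearrange to obtain
\[
\Econd{\exp\big(\im \langle u, X_{s+1}-X_s\rangle + \sum_{k=s+1}^{t-1}(\eta_{s+1}(u,k)-\eta_s(u,k))\big)}{\mathcal{F}_s} = \exp\big(\kappa^X_s(u) - \sum_{k=s+1}^{t-1}\eta_s(u,k)\big).
\]
Now I substitute the decomposition \eqref{decomposition}. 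Since $\alpha_s(u,k)$ is $\mathcal{F}_s$-measurable, the deterministic-in-$\omega$ exponential factor $\exp\big(\sum_{k=s+1}^{t-1}\alpha_s(u,k)\big)$ comes out of the conditional expectation, and by the normalization the index can be extended down to $k=s$. What remains inside is a joint Fourier–Laplace transform of the increment $(X_{s+1}-X_s,\varepsilon_{s+1}-\varepsilon_s)$ evaluated at the complex argument $\big(u,\,-\im \sum_{k=s}^{t-1}\sigma^{\,\cdot}_s(u,k)\big)$. Identifying this with $\exp\big(\kappa^{(X,\varepsilon)}_s(u,-\im\sum_{k=s}^{t-1}\sigma^{\,\cdot}_s(u,k))\big)$, taking complex logarithms (using the continuity/normalization at $u=0$ to pick the correct branch), and moving the $\eta_s(u,k)$-sum to the other side recovers \eqref{drift-condition}.

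For the converse, I would run exactly the same calculation in reverse, inducting on $t-s$. The base case $t=s+1$ is the short end condition, and the induction step rewrites $\Econd{e^{\im \langle u, X_t-X_s\rangle}}{\mathcal{F}_s}$ by peeling off one time step, applying the inductive hypothesis under the inner conditional expectation, and using \eqref{drift-condition} to reassemble the exponential sum $\sum_{k=s}^{t-1}\eta_s(u,k)$. The main technical obstacle is the one already flagged in the remark after Definition \ref{forward-char}: the identification of the conditional expectation with $\kappa^{(X,\varepsilon)}_s$ at a \emph{complex} second argument requires that $\kappa^{(X,\varepsilon)}_s$ admits an analytic extension in a suitable strip containing $-\im\sum_{k=s}^{t-1}\sigma^{\,\cdot}_s(u,k)$, i.e.\ the existence of appropriate exponential moments of $\varepsilon_{s+1}-\varepsilon_s$; I would state this as the standing integrability assumption under which all terms are finite and proceed as above.
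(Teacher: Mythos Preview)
Your approach is correct and essentially the same as the paper's. The paper phrases the computation via the equivalent martingale characterization --- $\eta$ is the forward characteristic of $X$ if and only if $s \mapsto \exp\big(\im\langle u,X_s\rangle + \sum_{k=s}^{t-1}\eta_s(u,k)\big)$ is a martingale --- and then checks the one-step martingale increment condition, which is exactly your tower-property identity; this packaging lets the paper handle both directions at once by saying ``reverse the previous conclusions'' rather than writing out your induction on $t-s$, but the content is identical.

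One small algebra slip: in your displayed equation the right-hand side should read $\exp\big(\kappa^X_s(u)\big)$, not $\exp\big(\kappa^X_s(u) - \sum_{k=s+1}^{t-1}\eta_s(u,k)\big)$, since you already pulled the $\mathcal{F}_s$-measurable factor $\exp\big(-\sum_{k=s+1}^{t-1}\eta_s(u,k)\big)$ inside the conditional expectation to form the difference $\eta_{s+1}-\eta_s$. With that correction your subsequent substitution of the decomposition and identification with $\kappa^{(X,\varepsilon)}_s$ goes through exactly as you describe. Your closing remark on the analytic extension of $\kappa^{(X,\varepsilon)}_s$ to complex second arguments is well taken; the paper leaves this implicit.
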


\begin{proof}
In contrast to the continuous time theory the proof of this statement is elementary since neither stochastic integration nor the theory of process characteristics is needed: let us assume that $ \eta $ is in fact the forward characteristic process of $ X $, then the short end condition is an immediate consequence of the respective definitions. It remains to be proved that the drift condition \eqref{drift-condition} holds true. This follows from the fact that
\[
\Econd{\exp(\im \langle u, X_t \rangle)}{\mathcal{F}_s} = \exp \big(\im\langle u, X_s \rangle  + \sum_{k=s}^{t-1} \eta_s(u,k) \big)
\]
for $ 0 \leq s \leq t $ and for $ u \in \mathbb{R}^n $ defines a martingale if and only if $ \eta $ is a process of forward characteristics (in particular $ \eta_s(0,t)=0 $ for $ 0 \leq s \leq t $). Hence we have to check whether
\[
\Econd{\exp \left(\im \langle u, X_{s+1} - X_s \rangle + \sum_{k=s+1}^{t-1} \big(\eta_{s+1}(u,k)-\eta_s(u,k)\big) - \eta_s(u,s) \right)}{\mathcal{F}_s} = 1
\]
for $ 0 \leq s < t $ and for $ u \in \mathbb{R}^n $. Let us assume martingality: when we insert the decomposition of $ \eta $ we obtain indeed the desired assertion, since
\[
\Econd{\exp\left(\im \langle u, X_{s+1} - X_s \rangle + \sum_{i=1}^d \Big( \sum_{k=s}^{t-1} \sigma_s^i(u,k) \Big) \Delta \varepsilon^i_s - \eta_s(u,s) + \sum_{k=s}^{t-1} \alpha_s(u,k)\right)}{\mathcal{F}_s} = 1
\]
if and only if
\begin{equation*}
\eta_s(u,s) - \sum_{k=s}^{t-1} \alpha_s(u,k) = \kappa^{(X,\varepsilon)}_s(u, - \im \sum_{k=s}^{t-1} \sigma^._s(u,k))
\end{equation*}
for $ 0 \leq s < t $ holds true by the very definition of the process characteristic $ \kappa^{(X,\varepsilon)} $ and the short end condition. Notice that we can extend the sum from $ k=s+1 $ to $k=s$ by Remark \ref{predictability-remark}. 

Assume conversely that the process $ \eta $ satisfies the two stated conditions, then by reversing the previous conclusions we obtain the desired martingale property, which concludes the proof.
\end{proof}

\begin{remark}
If one decomposes $ X $ in a similar way as $ \eta $, one can further simplify the previous conditions: we assume that
\begin{equation}\label{decomposition-X}
X_{s+1}-X_s  = \beta_s + \sum_{i=1}^d \gamma^i_s \Delta \varepsilon_s^i
\end{equation}
for $ 0 \leq s $ and $ u \in \mathbb{R}^n $, with adapted, $\mathbb{R}^n$-valued stochastic processes $ \beta $, $\gamma^i $, $i=1,\ldots,d$.
Then by Definition \ref{process-char} we have
\begin{equation}\label{process-char-Xdecomposed}
\kappa^{\varepsilon}_s(\langle u, \gamma^._s \rangle ) + \im \langle u , \beta_s \rangle  = \kappa^X_s(u)
\end{equation}
for $ s \geq 0 $ and $ u \in \mathbb{R}^n $, and
\begin{equation}\label{forward-char-Xdecomposed}
\kappa^{\varepsilon}_s(v+\langle u, \gamma^._s \rangle) + \im \langle u , \beta_s \rangle  = \kappa^{(X,\varepsilon)}_s(u,v)
\end{equation}
for $ s \geq 0 $ and $ u \in \mathbb{R}^n $, $ v \in \mathbb{R}^d $. This nicely explains how the joint process characteristic of $ (X,\varepsilon) $ expresses dependencies between $ X $ and $ \varepsilon $. Notice that there are no restrictions imposed on the process $ \varepsilon $ a priori, so that we could decompose the forward characteristics $\eta$ of $X$ with respect to $X$ itself -- if possible, i.e. $\varepsilon = X$. In the case of affine processes this is particularly useful, as can be seen in Chapter \ref{fdr}, where it allows for simpler formulae.
\end{remark}

\begin{remark}
The case of local independence is of particular importance (compare the corresponding notion in \cite{kk10}): $ X $ and $ \varepsilon $ are called \emph{locally independent} if
\begin{equation}\label{local independence}
\kappa^{(X,\varepsilon)}_s(u,v) = \kappa_s^X(u) + \kappa_s^\varepsilon(v)
\end{equation}
for $ u \in \mathbb{R}^n $, $ v \in \mathbb{R}^d $. In this case the drift condition \eqref{drift-condition} simplifies to
\begin{equation}\label{simplified-drift-condition}
- \sum_{k=s}^{t-1} \alpha_s(u,k) = \kappa^{\varepsilon}_s(- \im \sum_{k=s}^{t-1} \sigma^._s(u,k))
\end{equation}
for $0 \leq s \leq t $ and $ u \in \mathbb{R}^n $.
\end{remark}
\begin{remark}
Forward characteristics encode the term structure of distributions of increments of a stochastic process $ X $, i.e.~the distributions of $ X_t - X_s $, for $ 0 \leq s \leq t $, conditional on the information $ \mathcal{F}_s $ at time $ s $. Notice that there is redundant information in processes of forward characteristics (in contrast to processes of process characteristics), which in turn translates to drift conditions like \eqref{drift-condition} for processes of forward characteristics.
\end{remark}

\section{Applications of forward characteristics in mathematical finance}\label{applications-forward-characteristics}

In this section we introduce three examples, where forward characteristics actually appear in (well-known) models of mathematical finance from interest rate theory, option pricing theory and credit risk theory.

\subsection{Forward characteristics of a predictable process -- interest rate theory} We consider a discrete time bank account process
\[
B_s:=\exp(\sum_{k=0}^{s-1} R_k)
\]
for $ s \geq 0 $ with some real-valued short rate process $ {(R_s)}_{s \geq 0} $. We define the integrated short rate $ X_s := \sum_{k=0}^{s-1} R_k $ for $s \geq 0$ and consider its forward characteristics $ \eta $, i.e.
\[
\Econd{\exp(\im  u \sum_{k=s}^{t-1} R_k )}{\mathcal{F}_s} = \exp \big(\sum_{k=s}^{t-1} \eta_s(u,k) \big)
\]
for $ 0 \leq s \leq t $, which we assume to be decomposable with respect to a process $\varepsilon$. 

Since $ X $ is predictable, $X$ and any adapted process $ \varepsilon $ are locally independent, since
\[
\Econd{\exp(\im u( X_{s+1} - X_s ) + \im\langle v , \varepsilon_{s+1}-\varepsilon_s \rangle )}{\mathcal{F}_s} = \exp \big( \kappa^X_s(u) + \kappa^\varepsilon_s(v) \big)
\]
by the fact that $ X_{s+1} $ is $ \mathcal{F}_s $-measurable for $ s \geq 0 $.

Additionally, we assume that $ \EE( \exp((1+\delta) | X |)) < \infty $ for some $ \delta > 0 $, then we can extend the definition of forward characteristics to the strip $ \mathbb{R} \times {[-1,1]} \im $. 
If we choose $ u = \im $ in the equation above, we can identify the forward rate process in this discrete time setting with the process ${(-\eta_s(\im,t))}_{0 \leq s \leq t} $, more precisely the zero-coupon bond price $P(s,t)$ with maturity $t \geq s \geq 0$ is given by
\[
P(s,t) 
= 
\Econd{\exp(- \sum_{k=s}^{t-1} R_k )}{\mathcal{F}_s} = \exp \big(\sum_{k=s}^{t-1} \eta_s(\im,k) \big).
\]
Note that the drift condition \eqref{simplified-drift-condition} corresponds (in continuous time) to the famous HJM drift condition which was given in \cite{hjm90} and \cite{hjm92}, where the dynamics of forward rates were studied first time. In this sense forward characteristics and the corresponding drift condition extend the framework of forward rates and HJM-drift condition.

\subsection{Forward characteristics of a logarithm of a martingale -- option pricing theory} We consider the logarithm $ X $ of a martingale process $ S $ describing a discounted price process. Again we assume that $ \EE( \exp((1+\delta) | X |)) < \infty $ for some $ \delta > 0 $, then we can again extend the definition of forward characteristics to the strip $ \mathbb{R} \times {[-1,1]} \im $. The forward characteristics are related to prices of European options with ``Fourier'' payoff (see also the seminal paper \cite{cm99}) via the following formula
\[
\Econd{\exp(\im u X_t )}{\mathcal{F}_s} = \exp \big( \im u X_s + \sum_{k=s}^{t-1} \eta_s(u,k) \big)
\]
for $ 0 \leq s \leq t $. In other words: forward characteristics encode conditional laws of $ X_t $ given the information up to time $ s $, which amounts by Breeden-Litzenberger formulas to knowledge of the full option price surface. The dynamic, continuous time version of this approach is the starting point of the tangent L\'evy model approach \cite{cn12}.

The martingale condition for $ S = \exp(X) $ translates to $ \eta_s(- \im,t) = 0 $ for all $ 0 \leq s \leq t $, see corresponding formulas in \cite{kk10}. Local independence means absence of dependence of driving processes, or in terms of option pricing theory vanishing leverage.

\subsection{Forward characteristics of a predictable process and instantaneous recovery -- credit risk theory} We consider a two dimensional process $ (X^1,X^2) $ with predictable first coordinate and general second coordinate. Again we assume $ E( \exp((1+\delta) \| X \|)) < \infty $ for some $ \delta > 0 $ and extend the definition of forward characteristics to the strip $ \mathbb{R} \times {[-1,1]} \im $. From a point of view of credit risk theory we can understand $ S:=\exp(X^2) $ as the instantaneous recovery process and $ B:=\exp(X^1) $ as risk-free bank account process. We can define defaultable bond prices $ P(s,t) $
via
\[
S_s P(s,t) = \Econd{\frac{B_sS_t}{B_t}}{\mathcal{F}_s}
\]
for $0 \leq s \leq t$ and understand the evaluation of forward characteristics $ \eta $ at $ (\im,-\im) $ as logarithm of defaultable bond prices, with respect to some pricing measure $ \QQ $. More precisely,
\begin{align*}
P(s,t) = & \Econd{\frac{B_sS_t}{B_tS_s}}{\mathcal{F}_s} = \Econd{\exp\big(-(X^1_t-X^1_s)+(X^2_t-X^2_s)\big)}{\mathcal{F}_s} \\
= & \exp \big(\sum_{k=s}^{t-1}\eta_s(\im,-\im,t) \big) \, .
\end{align*}
Notice the analogy to foreign exchange markets, which follows ideas introduced in \cite{jartur:95}.

\section{Forward characteristics of affine processes}\label{forward-characteristics-affine-processes}

In this section we introduce discrete time affine processes and draw some elementary conclusions on their forward characteristics, which are particularly easy to calculate. For a rigorous treatment of continuous time-homogeneous and inhomogeneous affine processes see \cite{dfs03} and \cite{f05} respectively. As usual discrete time theory is completely elementary.

\begin{definition}\label{affine-process}
Let $ D $ be a closed, convex domain with non-empty interior containing $ 0 $ in $ \mathbb{R}^n $. A family of adapted stochastic processes $ (X_t^x)_{t \geq 0} $ for $ x \in D \subset \mathbb{R}^n $ is called a \emph{time-homogeneous affine process} if the affine property
\begin{equation}\label{affine property}
\Econd{\exp( \langle u, X_t^x \rangle)}{\mathcal{F}_s} = \exp \big(\phi( u,t-s) + \langle \psi(u,t-s), X_s^x \rangle  \big)
\end{equation}
for $ 0 \leq s \leq t $ with given deterministic function $ \phi, \, \psi^j : \mathcal{U} \times \mathbb{N} \to \mathbb{C} $, for $ j = 1,\ldots, n$, holds true. In other words: the conditional characteristic function is exponentially affine in the state vector $ X $. Here we denote by $ \mathcal{U} $ the union $ \cup_{m \geq 1} \mathcal{U}_m $ with
\[
\mathcal{U}_m := \{ u \in \mathbb{C}^n \; | \sup_{x \in D} |\exp(\langle u, x \rangle )| \leq m  \} \, .
\]
We shall always assume the normalization $ \phi(u,0) = 0 $ and $ \psi^j(u,0) = u $ for $ u \in \mathcal{U} $ and $ j=1,\ldots,d$, which makes the functions $ \phi $ and $ \psi $ unique.
\end{definition}

\begin{remark}
Affine processes are Markov processes since the conditional expectation on the whole past only depends on the present.
\end{remark}

\begin{remark}
We can analogously define time-inhomogeneous affine processes by requiring the existence of functions $ \phi(u,s,t) $ and $ \psi(u,s,t) $ for $ 0 \leq s \leq t $ and $ u \in \mathcal{U} $.
\end{remark}

\begin{proposition}
Let $ X $ be a time-inhomogeneous affine process, then the forward characteristics satisfy
\[
\sum_{k=s}^{t-1} \eta_s(u,k) = \phi(\im u,s,t) + \langle \psi(\im u,s,t) - \im u, X_s^x \rangle \, ,
\]
or -- by taking first differences --
\[
\eta_s(u,t) = \phi(\im u,s,t+1) - \phi(\im u,s,t) + \langle \psi(\im u,s,t+1) - \psi(\im u,s,t), X_s^x \rangle \, ,  
\]
for $ 0 \leq s \leq t $, respectively. The functions $ \phi $ and $ \psi $ are furthermore unique solutions of the following difference equations with \emph{vector fields} $ F $ and $ R $ (we use the notion of vector fields in analogy to Riccati ODEs),
\begin{align}\label{riccati-equations}
\phi(u,t,t+1) & =   F(u,t) \, \\ \nonumber
\psi(u,t,t+1)-u & =   R(u,t) \, \\ \nonumber
\phi(u,s,t+1) & =  F(u,t) + \phi(u+R(u,t),s,t) \, \\ \nonumber
\psi(u,s,t+1) & = \psi(u+R(u,t),s,t) \,
\end{align}
and initial values $ \phi(u,s,s)=0 $ and $ \psi(u,s,s) = u $ for $ 0 \leq s < t $.
\end{proposition}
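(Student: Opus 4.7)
The plan is to read both sides of the identity for the forward characteristics directly out of the affine property, and then to derive the Riccati-type difference equations by applying the tower property one step at a time.

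First I would plug $\im u$ into the affine property \eqref{affine property} (in its time-inhomogeneous form with functions $\phi(\cdot,s,t)$ and $\psi(\cdot,s,t)$), obtaining
\[
\Econd{\exp(\im \langle u, X_t^x \rangle)}{\mathcal{F}_s} = \exp\big(\phi(\im u,s,t) + \langle \psi(\im u,s,t), X_s^x\rangle \big).
\]
Since $X_s^x$ is $\mathcal{F}_s$-measurable, I can factor $\exp(\im\langle u,X_s^x\rangle)$ out of the conditional expectation and rewrite the left-hand side as a conditional expectation of $\exp(\im\langle u, X_t^x - X_s^x\rangle)$ times $\exp(\im\langle u,X_s^x\rangle)$. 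Comparing with Definition~\ref{forward-char} and invoking the continuity of the complex logarithm granted by the normalization $\eta_s(0,t)=0$ (so that both sides agree at $u=0$), I obtain the summed identity
\[
\sum_{k=s}^{t-1} \eta_s(u,k) = \phi(\im u,s,t) + \langle \psi(\im u,s,t) - \im u, X_s^x\rangle.
\]
The stated formula for $\eta_s(u,t)$ itself is then just the first difference in $t$.

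For the Riccati-type system I would first \emph{define} the vector fields through the one-step affine expansion, namely $F(u,t) := \phi(u,t,t+1)$ and $R(u,t) := \psi(u,t,t+1) - u$; the normalizations $\phi(u,s,s)=0$ and $\psi(u,s,s)=u$ are forced by evaluating the affine property at $t=s$. For the recursion, I would apply the tower property
\[
\Econd{\exp(\langle u, X_{t+1}^x \rangle)}{\mathcal{F}_s} = \Econd{\Econd{\exp(\langle u, X_{t+1}^x \rangle)}{\mathcal{F}_t}}{\mathcal{F}_s},
\]
use the affine property on the inner expectation to get $\exp\bigl(F(u,t) + \langle u + R(u,t), X_t^x\rangle\bigr)$, and then apply the affine property again (now with argument $u + R(u,t)$, noting that $F(u,t)$ is a deterministic constant that can be pulled out) to the outer expectation. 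Matching coefficients of $1$ and of $X_s^x$ with the direct expression $\exp\bigl(\phi(u,s,t+1) + \langle\psi(u,s,t+1),X_s^x\rangle\bigr)$ yields
\[
\phi(u,s,t+1) = F(u,t) + \phi(u+R(u,t),s,t), \qquad \psi(u,s,t+1) = \psi(u+R(u,t),s,t),
\]
which are exactly \eqref{riccati-equations}.

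Uniqueness is then immediate: given $F$, $R$ and the initial conditions at $s=t$, the recursion determines $\phi(u,s,t)$ and $\psi(u,s,t)$ for every $t\geq s$ by induction on $t-s$. The main (mild) technical obstacle is justifying the separation of $\phi$ and $\psi$ from a single scalar identity; this relies on the fact that the affine representation holds for all $x \in D$ and that $D$ has non-empty interior in $\mathbb{R}^n$, so that evaluating at $n+1$ affinely independent points of $D$ identifies the constant part and each linear coefficient uniquely. The domain bookkeeping — checking that $u + R(u,t) \in \mathcal{U}$ whenever $u \in \mathcal{U}$, so that the nested affine evaluations are legitimate — is the only other point requiring care, and it is handled by the very definition of $\mathcal{U} = \cup_m \mathcal{U}_m$ together with the one-step conditional expectation being bounded.
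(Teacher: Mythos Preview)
Your proposal is correct and uses essentially the same idea as the paper: the affine property plus the tower property, followed by identification of the constant and linear parts using that $D$ has non-empty interior. The only organizational difference is that the paper first derives the full semiflow identities
\[
\phi(u,s,t)+\phi(\psi(u,s,t),r,s)=\phi(u,r,t),\qquad \psi(\psi(u,s,t),r,s)=\psi(u,r,t)
\]
for $0\leq r\leq s\leq t$ (by conditioning at an intermediate time $s$ and then at $r$) and then specializes to one step to obtain \eqref{riccati-equations}, whereas you go directly to the one-step recursion by conditioning at $t$ and then at $s$; both routes are equivalent since your $u+R(u,t)$ is exactly the paper's $\psi(u,t,t+1)$. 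The paper additionally spells out the inductive verification that solutions of \eqref{riccati-equations} recover the semiflow property, which you subsume in your one-line uniqueness remark.
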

\begin{proof}
The proof is a simple application of the affine property and uniqueness of logarithms, since
\begin{align*}
 & \Econd{\Econd{\exp( \langle u, X_t^x \rangle)}{\mathcal{F}_s}}{\mathcal{F}_r}  \\
 &= 
\Econd{\exp (\phi( u,s,t) + \langle \psi(u,s,t), X_s^x \rangle  )}{\mathcal{F}_r}  \\
&= \exp \big(\phi( u,s,t) + \phi(\psi(u,s,t),r,s) + \langle \psi(\psi(u,s,t),r,s) , X_r^x \rangle  \big )
\end{align*}
for $ 0 \leq r \leq s \leq t $, on the one hand. On the other hand the affine transform formula leads to
\[
\Econd{\Econd{\exp( \langle u, X_t^x \rangle)}{\mathcal{F}_s}}{\mathcal{F}_r}  = \exp \big(\phi( u,r,t) + \langle \psi(u,r,t), X_r^x  \rangle  \big) \, ,
\]
hence by uniqueness of the representation we obtain the semiflow property
\begin{align}\label{semiflow-property}
\phi(u,s,t) + \phi(\psi(u,s,t),r,s) & = \phi(u,r,t) \\
 \psi(\psi(u,s,t),r,s) = \psi(u,r,t) \nonumber
\end{align}
for $ 0 \leq r \leq s \leq t $, $ u \in \mathcal{U}$. These equations translate immediately to the asserted difference equations. On the other hand recursively defined solutions of the difference equation \eqref{riccati-equations} satisfy the semiflow property \eqref{semiflow-property}: we argue by induction in $ t-r =n $ assuming the statement is true for all $ 0 \leq r \leq s \leq t $ with $ t-r \leq n $. The Riccati equation \eqref{riccati-equations} then yields for $ 0 \leq r < s \leq t $
\[
\psi(u,r,t+1) = \psi(\psi(u,t,t+1),r,t) = \psi(\psi(\psi(u,t,t+1),s,t),r,s) = \psi(\psi(u,s,t+1),r,s),
\]
where the second and third equality use the induction hypothesis (notice that $ t+1 - s \leq n $). In an analogous manner we can conclude
\begin{align*}
\phi(u,r,t+1) & = \phi(u,t,t+1) + \phi(\psi(u,t,t+1),r,t) \\ 
& = \phi(u,t,t+1) + \phi(\psi(u,t,t+1),s,t) + \phi(\psi(\psi(u,t,t+1),s,t),r,s) \\
& = \phi(u,s,t+1) + \phi(\psi(u,s,t+1),r,s)
\end{align*}
for $ 0 \leq r < s \leq t $ assuming by induction hypothesis that the result holds for $ t - r \leq n $.
\end{proof}

\begin{remark}
For the purpose of this remark we switch from discrete to continuous time settings, where the theory of time-inhomogeneous affine process has been formulated in \cite{f05}. Notice that the Riccati difference equations look at first sight different from usual forward difference equations, since the ``vector fields'' are inserted in the flows and not as usual other way round. Taking continuous time limits this leads to the following transport PDEs
\begin{equation}\label{transport-PDE}
\frac{\partial}{\partial t} \psi(u,s,t) =  D \psi (u,s,t) (R(u,t))
\end{equation}
and
\[
\frac{\partial}{\partial t} \phi(u,s,t) =  F(u) + D \phi (u,s,t) (R(u,t)) \, 
\]
for $ 0 \leq s \leq t $ with standard initial conditions $ \psi(u,s,s) = u $ and $ \phi(u,s,s) = 0 $, which govern the structure of $ \phi $ and $ \psi $. However, such transport PDEs can be related to solutions of ODEs. Given a $C^1$-(time-inhomogeneous) solution flow satisfying generalized Riccati  ODEs
\[
\frac{\partial}{\partial s} \psi(u,s,t) =  - R(\psi(u,s,t),s) \text{ and } \frac{\partial}{\partial s} \phi(u,s,t) =  - F(\psi(u,s,t),s) \, ,
\]
for $ 0 \leq s \leq t $ with standard \emph{initial} conditions $ \psi(u,t,t) = u $ and $ \phi(u,t,t) = 0 $ (notice the backwards character in time $s$ of the ODEs). We obtain the semiflow properties $ \psi(\psi(u,s,t),r,s) = \psi(u,r,t) $ and $ \phi(u,s,t) + \phi(\psi(u,s,t),r,s) = \phi(u,r,t) $ for all $ 0 \leq r \leq s \leq t $. Differentiating these equations leads to the previously mentioned transport PDEs \eqref{transport-PDE}. In the case of time-homogeneous vector fields the difference between forward and backward flows is redundant since they coincide. Notice that for the theory of affine processes rather the forward transport equation, which corresponds to a backward ODE, is the conceptually correct point of view (which is, however, equivalent in continuous time).
\end{remark}

In applications of affine processes Hull-White extensions play a particular role, e.g.~in interest rate theory. We can consider Hull-White extensions for forward characteristics in a general affine framework. Let us first describe solution classes of Riccati equations \eqref{riccati-equations}, which correspond to changes in $F$:
\begin{proposition}
Consider the Riccati equation \eqref{riccati-equations} of a time-homogeneous affine process, i.e.~the vector fields $F$ and $R$ which characterize the system are not time-dependent. Consider a map $ (u,t) \mapsto \mu(u,t) $, then the Riccati equations associated to $ R $ and $ \tilde F(u,t) := F (u) + \mu(u,t) $ defined for $ t \geq 0 $ and $ u \in \mathcal{U} $ have a unique solution for all initial values and
\[
\widetilde{\phi}(u,s,t) = \sum_{k=s}^{t-1} F(\psi(u,t-1-k))+\mu(\psi(u,t-1-k),k) \text{ and } \widetilde{\psi}(u,s,t) = \psi(u,t-s)
\]
for $ 0 \leq s \leq t $ and $ u \in \mathcal{U} $.
\end{proposition}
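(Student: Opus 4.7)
The plan is to treat $\widetilde{\psi}$ and $\widetilde{\phi}$ separately, iterating the discrete Riccati recursion \eqref{riccati-equations} with time-inhomogeneous vector fields $R(u,t):=R(u)$ and $\widetilde{F}(u,t):=F(u)+\mu(u,t)$, and to proceed by induction on $n:=t-s$. Existence and uniqueness will be built in since \eqref{riccati-equations} is a forward recursion in $t$ with prescribed initial condition at $t=s$, so solving step by step both produces the solution and shows it is unique.

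First, I would observe that the $\psi$-equations in \eqref{riccati-equations} involve only $R$, never $F$. Hence the recursion for $\widetilde{\psi}$ reads $\widetilde{\psi}(u,s,t+1)=\widetilde{\psi}(u+R(u),s,t)$ with $\widetilde{\psi}(u,s,s)=u$, which is formally the same recursion as the one defining the time-homogeneous flow $\psi(u,\cdot)$. A direct induction on $t-s$ then yields $\widetilde{\psi}(u,s,t)=\psi(u,t-s)$; this is where the time-homogeneity of $R$ does the real work.

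For $\widetilde{\phi}$ I would induct on $n=t-s$ with base case $\widetilde{\phi}(u,s,s)=0$, which matches the empty sum. In the inductive step, \eqref{riccati-equations} gives
\[
\widetilde{\phi}(u,s,t+1)=F(u)+\mu(u,t)+\widetilde{\phi}(u+R(u),s,t).
\]
By the inductive hypothesis, the last summand equals $\sum_{k=s}^{t-1}\bigl[F(\psi(u+R(u),t-1-k))+\mu(\psi(u+R(u),t-1-k),k)\bigr]$. The key algebraic identity is
\[
\psi(u+R(u),j)=\psi(\psi(u,1),j)=\psi(u,j+1),
\]
which follows directly from the semiflow property \eqref{semiflow-property} of the unperturbed homogeneous flow. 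Using this to shift the second argument of $\psi$ by one inside each summand, and absorbing the new contribution $F(u)+\mu(u,t)=F(\psi(u,0))+\mu(\psi(u,0),t)$ as the $k=t$ term, one re-indexes $j=t-1-k\mapsto t-k$ and obtains exactly the claimed closed-form expression for $\widetilde{\phi}(u,s,t+1)$.

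I do not anticipate a genuine obstacle; the argument is essentially a bookkeeping exercise. The only care needed is to track the index shift cleanly, and to recognize that it is precisely the time-homogeneity of $R$ (combined with its semiflow property) that allows the perturbation $\mu$ to affect $\widetilde{\phi}$ additively without altering $\widetilde{\psi}$.
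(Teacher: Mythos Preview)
Your proposal is correct and follows exactly the approach the paper indicates: the paper's proof consists of the single phrase ``By induction,'' and you have supplied the details of that induction, separating the $\widetilde{\psi}$-recursion (unchanged since it involves only $R$) from the $\widetilde{\phi}$-recursion and using the homogeneous semiflow identity $\psi(\psi(u,1),j)=\psi(u,j+1)$ to re-index the sum.
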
 

\begin{proof} By induction.
\end{proof}

The idea of Hull-White extensions is simply to see that by varying $F$ (without changing $R$) one can already obtain quite a variety of initial forward characteristics of processes on $D$, so called initial configurations. These configurations can be parametrized by an appropriate choice of a time-dependent map $\mu$. We need some notation to make this more precise: we apply here that complex logarithms of continuous functions $  \mathbb{R}^n \ni u \mapsto f(u) \neq 0 $ are uniquely defined if the value $ \log f(u) $ at $ u = 0 $ is fixed, see for instance \cite{mst13}. In our case we always use the normalization by $ 0 $ at $ u = 0 $.
\begin{definition}\label{Inc-set}
The set $ \operatorname{Inc}^D $ denotes the set of continuous functions vanishing at the points $ (0,t) $ 
\[
(u,t) \mapsto \log \EE \big[\exp ( \langle u ,\Delta Y_t \rangle)  \big]
\]
for $u \in \mathcal U$, $t \geq 0$ and for stochastic processes $Y$ taking values in $D$ such that the increments $\Delta Y_t$ satisfy $ y + \Delta Y_s \in D $ for all $y \in D$, and $ s \geq 0 $. 
\end{definition}

\begin{definition}\label{lying-above}
Fix an affine time-homogeneous process $X$ taking values in $D$ with characterizing functions $\phi$ and $\psi$ and initial value $x$ at time $0$. Let $ \mu \in \operatorname{Inc}^D $, then
\[
\widetilde{\phi}(u,s,t) = \sum_{k=s}^{t-1} F(\psi(u,t-1-k))+\mu(\psi(u,t-1-k),k) \text{ and } \widetilde{\psi}(u,s,t) = \psi(u,t-s)
\]
for $ 0 \leq s \leq t $ and $ u \in \mathcal{U} $, define logarithms of characteristic functions of distributions
\[
\sum_{k=0}^{t-1}\nu_0(u,k) := \widetilde{\phi}( \im u,0,t) + \langle \widetilde{\psi}(\im u,0,t) - \im u, x \rangle
\]
on $D$ for $ 0 \leq t $. 

We denote henceforward the set of such initial forward characteristics $ \nu_0 $ by $ I(x) $. Note that to every element $ \nu_0 \in I(x) $ there is at least one $ \mu \in \operatorname{Inc}^D $ generating it. 
In words: $I(x)$ is the set of \emph{initial configurations}, which lie ``above'' the marginal distributions of given affine process in the sense that $F$ is modified by an additional jump component, whose increment at time $t$ is characterized with $ \exp(\mu(u,t))$, for $ u \in \mathcal{U}$.
\end{definition}

The next proposition shows that for any initial configuration we can actually construct a time-inhomogeneous stochastic process attaining this forward characteristic at time $ s = 0 $:

\begin{proposition}\label{hull-white-extension}
Let $ X $ be a time-homogeneous affine process with characterizing functions $F$, $R$ and $\psi$. Then for every initial value $ x \in D $ and every initial configuration  $ \nu_0 \in I(x) $ there exists a unique stochastic process $ \widetilde{X} $ starting at $ x \in D $ with characterizing functions $ \widetilde{F} $, $R$ and $ \psi $, in the sense that  
\[
\sum_{k=s}^{t-1} \widetilde \eta_s(u,k) = \widetilde \phi(\im u,s,t) + \langle \psi( \im u, t-s) - \im u, \widetilde{X}_s \rangle  
\]
for $u \in \mathbb{R}^n$, $0 \leq s \leq t$, whose initial forward characteristic $\widetilde \eta_0 $ equals $ \nu_0 $. This process is called the \emph{Hull-White extension of $ X $ for a given initial configuration $ \nu_0 $}.
\end{proposition}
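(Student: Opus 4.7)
The plan is to realize $\widetilde X$ explicitly as a time-inhomogeneous Markov chain obtained by convolving the one-step transitions of $X$ with a jump kernel furnished by $\mu$, and then to read off its affine structure from the Riccati equations \eqref{riccati-equations}.

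First I would unpack the definition of $I(x)$. By Definition \ref{lying-above}, any $\nu_0 \in I(x)$ arises from some $\mu \in \operatorname{Inc}^D$, and by Definition \ref{Inc-set} this $\mu$ is itself (up to normalization) the log-characteristic function of increments $\Delta Y_s$ of a $D$-valued process with the property that $y + \Delta Y_s \in D$ for every $y \in D$ and $s \geq 0$. Set $\widetilde F(u,t) := F(u) + \mu(u,t)$; the preceding proposition then guarantees that the Riccati system with vector fields $(\widetilde F, R)$ admits the unique solution $\widetilde \phi(u,s,t) = \sum_{k=s}^{t-1} F(\psi(u,t-1-k)) + \mu(\psi(u,t-1-k),k)$, $\widetilde \psi(u,s,t) = \psi(u,t-s)$.

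Next I would construct $\widetilde X$ inductively. Starting from $\widetilde X_0 := x$, at each step I build the increment $\widetilde X_{s+1} - \widetilde X_s$ as the sum of two $\mathcal F_s$-conditionally independent pieces: a one-step increment $\Delta^{X}_s$ drawn from the transition kernel of the original time-homogeneous affine process at the current state $\widetilde X_s$, and an independent copy $\Delta Y_s$ whose conditional log-characteristic function is $\mu(u,s)$. Admissibility of the construction (i.e.~$\widetilde X_{s+1} \in D$) follows since $X$ preserves $D$ starting from $\widetilde X_s$ and, by the very definition of $\operatorname{Inc}^D$, addition of $\Delta Y_s$ to any point of $D$ stays in $D$. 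Multiplication of the two conditionally independent characteristic functions immediately yields the desired one-step transition
\[
\Econd{\exp(\langle u, \widetilde X_{s+1} - \widetilde X_s \rangle)}{\mathcal F_s} = \exp\bigl(\widetilde F(u,s) + \langle R(u), \widetilde X_s \rangle\bigr).
\]

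Finally I would iterate. Using the tower property together with the Riccati difference equations \eqref{riccati-equations} applied to $(\widetilde F, R)$, an induction on $t - s$ yields the full time-inhomogeneous affine identity
\[
\Econd{\exp(\langle u, \widetilde X_t \rangle)}{\mathcal F_s} = \exp\bigl(\widetilde \phi(u,s,t) + \langle \psi(u,t-s), \widetilde X_s \rangle\bigr).
\]
Taking first differences in $t$ (and fixing the complex logarithm by vanishing at $u=0$) gives the forward-characteristic formula in the statement; specializing at $s=0$ and comparing with Definition \ref{lying-above} gives $\widetilde\eta_0 = \nu_0$. Uniqueness in law is then automatic: the prescribed one-step affine kernel with vector fields $(\widetilde F, R)$ determines all finite-dimensional distributions, and any candidate process satisfying the claimed affine formula has, by taking $t = s+1$, precisely this one-step kernel. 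The only delicate point I expect is the $D$-valuedness of the construction under convolution with the $\mu$-kernel; this is exactly the constraint built into $\operatorname{Inc}^D$, so it reduces to carefully unwrapping Definition \ref{Inc-set}, and the rest of the argument is bookkeeping on the discrete Riccati recursion.
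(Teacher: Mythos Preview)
Your proposal is correct and follows essentially the same approach as the paper: both construct $\widetilde X$ inductively by convolving the one-step affine transition from the current state with an independent jump of log-characteristic $\mu(\cdot,s)$ supplied by $\operatorname{Inc}^D$, and then iterate via the tower property (equivalently, the Riccati recursion) to obtain the time-inhomogeneous affine formula with $\widetilde F = F + \mu$. Your version is slightly more explicit about $D$-invariance and about uniqueness in law, but the construction and the verification are the same.
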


\begin{proof}
Let $ X $ be a time-homogeneous affine process as above starting in $x \in D$ and let $ \nu_0 \in I(x) $ be fixed. Then there exists $ \mu \in \operatorname{Inc}^D $ such that
\[
\sum_{k=0}^{t-1}\nu_0(u,k)=\widetilde{\phi}(\im u,0,t) + \langle \psi( \im u,t) - \im u, x \rangle
\]
for $ t \geq 0 $ and $ u \in \mathbb{R}^n$. Hence we can construct a process $\widetilde X$ on $D$ by setting (in law)
\[
\widetilde{X}_{t+1} = X_{1}^{\widetilde{X_t}} + \Delta Y_t,
\]
for $ t \geq 0 $. Here we use a stochastic process $ Y $ with increments $ \Delta Y_t $ independent of $ \mathcal{F}_t $ (on a possibly enlarged probability space), such that its increments satisfy
\[
E \big[ \exp( \langle u , \Delta Y_t \rangle ) \big] = \exp(\mu(u,t)) \, , \quad t \geq 0 \, ,
\]
and $ X_1^x$ a random variable realizing the time-homogeneous affine process at time $1$ independent of $ \mathcal{F}_t $ and $Y$, for $ x \in D$. This uniquely defines a stochastic process $ \widetilde{X} $ with state space $ D $ by assumption on $ X $ and $ \Delta Y $. The conditional characteristic function of $ \widetilde{X} $ is then deduced by iteration of one-step conditional expectations and the affine property of $X$:
\begin{align*}
& \Econd{\exp( \langle u, \widetilde{X}_t \rangle)}{\mathcal{F}_s} \\
= &\Econd{ \Econd{\exp( \langle u, X_1^{\widetilde{X}_{t-1}} + \Delta{Y}_{t-1} \rangle)}{\mathcal F_{t-1}}}{\mathcal{F}_s} \\
 = & \Econd{\exp( \phi(u,1) + \langle \psi(u,1) , \widetilde{X}_{t-1} \rangle + \mu(u,t-1))}{\mathcal{F}_s} \\
 = & \exp \big(\phi( u,t-s) + \langle \psi(u,t-s), \widetilde{X}_s \rangle + \sum_{k=1}^{t-s} \mu(\psi(u,k-1),t-k)) 
\end{align*}
for $u \in \mathcal U$ and $ 0 \leq s < t $. For the forward characteristics of $\widetilde X$ it follows
\begin{align*}
\sum_{k=s}^{t-1} \widetilde \eta_s(u,k) = \widetilde \phi( \im u,s,t)) + \langle \psi( \im u,t-s)) - \im u ,\widetilde X_s \rangle
\end{align*}
for $u \in \mathbb{R}^n$ and $0 \leq s \leq t$. This gives the result with $\widetilde{F}(u,t) := F(u) + \mu(u,t)  $ for $u \in \mathcal U$, $t \geq 0$.
\end{proof}

\begin{remark}
Notice that the process $ \widetilde{X} $ at this point is not a Markov process but only defined for initial value $ x \in D $ and the initial configuration $\nu_0$. However, we could define in a completely similar manner an affine, time-inhomogeneous Markov process with specifications $ \widetilde{F} $ and $ R $, which leaves $ D $ invariant.
\end{remark}
\begin{remark}
As a particular case of the above structure one can consider the situation, where only in the first coordinate, e.g., independent jumps are added.
\end{remark}
\begin{example}
We consider Hull-White extensions for a one dimensional affine process. Let $ (R_t)_{t \geq 0} $ be a Gaussian process satisfying the stochastic difference equation
\[
R_{t+1}=R_t+(b-aR_t)+\sigma W_t \, , R_0 \in \mathbb{R},
\]
for $ t \geq 0 $, real parameters $ a,b,\sigma $, and a sequence $ (W_t)_{t \geq 0} $ of independent, identically distributed centered normal random variables with variance $1$. This is a discrete time affine process since
\[
\Econd{\exp( u (R_{t+1} -R_t))}{\mathcal{F}_t} = \exp(u(b-aR_t)+\sigma^2 u^2/2) \, ,
\]
which yields $ R(u) = (1- a) u $ and $ F(u) = b u + \sigma^2 u^2/2 $. We can actually also calculate the functions $ \phi $ and $ \psi $, namely
\[
\psi(u,t-s)=(1-a)^{t-s}u
\]
and
\[
\phi(u,s,t)=\sum_{k=s}^{t-1} \big( b(1-a)^{t-1-k}u + \sigma^2 \frac{(1-a)^{2(t-1-k)}u^2}{2}\big) \, ,
\]
for $ u \in \mathcal U $. Hull-White extensions will change $ F $, which is in this case without constraints on the state space particularly easy. We start first with all possible Hull-White extensions, which are simply parametrized by a choice of a function $ \mu $, such that $ u \mapsto \mu(u,t) $ is a cumulant generating function for every $ t \geq 0 $. Then we define
\[
\widetilde{F}(u,t)= F(u) + \mu(u,t) \, ,
\]
which defines a time-inhomogeneous affine process with independent jumps of cumulant $ \mu(.,t) $ added at each point in time $t$. Calculating $ \widetilde{\phi} $ yields
\[
\widetilde{\phi}(u,s,t) = \phi(u,s,t) + \sum_{k=s}^{t-1} \mu((1-a)^{t-1-k}u,k)
\]
for $ u \in \mathcal U $. Given an initial forward configuration $ \nu_0 $, then an equation of the type
\[
\sum_{k=0}^{t-1}\nu_0(u,k)=\widetilde{\phi}(\im u,0,t) + \langle \psi( \im u,t)- \im u, x \rangle
\]
for $ t \geq 0 $ and $ u \in \mathbb{R}^n $ holds true, which allows to calculate recursively the function $ \mu $. We say that $ \nu_0 \in I(x) $ if and only if the function $ \mu \in \operatorname{Inc}^D $, i.e.~$\mu(.,t)$ is a cumulant generating function.
\end{example}


\section{A stochastic difference equation for forward characteristic processes}\label{SDE-forward-characteristics}


Affine processes, which appear as natural examples in the theory of forward characteristics, are also characterized by a distinguished geometric property. To state this property we need to formulate a stochastic difference equation for forward characteristics in order to construct Markov processes of forward characteristics.

The state space of the stochastic difference equation will be $ D \times \Theta^D $, but we investigate the equation on convex subsets $ \mathbb{R}^n \times \Theta^n $ for the purpose of convenience. $ \Theta^D $, or $ \Theta^n $, respectively, denotes the set of functions
$$
\mathbb R^n \times \mathbb N \ni (u,x) \mapsto \theta(u,x)
$$
such that $ \sum_{k=0}^{t-1} \theta(u,k) $ are logarithms of characteristic functions of stochastic processes taking values in $D$, or $ \mathbb{R}^n $, respectively. The formulation of the equation will however be on the Hilbert space $G$, given through the following definition:

\begin{definition}
Let $ G $ be a Hilbert space of continuous complex-valued functions defined on $ \mathbb{R}^n $ (or a more general set containing $ \mathbb{R}^n $ depending on the modeling purpose) , i.e. $ G \subset C(\mathbb{R}^n; \mathbb{C}) $. 

$ H $ is called a \emph{forward configuration Hilbert space} if $H$ is a Hilbert space of functions $ \theta: \mathbb{N} \to G $, i.e.~$H \subset l^2_w(\mathbb{N};G)$, i.e.~a weighted sequence space, such that 
\begin{enumerate}
\item we have a continuous embedding $ H \subset l^\infty_{\operatorname{loc}}(\mathbb{R}^n \times \mathbb{N}; \mathbb{C})$.
\item The shift semigroup $ (S_t \theta) (u,x) := \theta(u,t+x) $ acts as strongly continuous semigroup of linear operators on $ H $ for $ x,t \geq 0 $ and $u \in \mathbb R^n$.
\item Functions of finite activity L\'evy-Khintchine type
\[
 (u,t) \mapsto \im a(t)u - \frac{u^{T}b(t)u}{2} + \int_{\mathbb{R}^n} (\exp(\im u \xi) -1) \nu_t(d \xi) , \quad (u,t) \in \mathbb R^n \times \mathbb N,
\]
lie in $ H $, where $ a $, $ b $, $ \nu $ are functions defined on $ \mathbb{N} $ taking values in $ \mathbb{R}^n $, the positive-semidefinite matrices on $ \mathbb{R}^n $, and the finite positive measures on $ \mathbb{R}^n $, respectively (this corresponds to processes with independent increments and finite activity).
\end{enumerate}
\end{definition}

\begin{remark}
Notice that elements of the Hilbert space $H$ are understood in Musiela parametrization and therefore denoted by a different letter in the sequel. We have the relationship $ \eta_s(u,s+x) = \theta_s(u,x) $, with time to maturity  $x:= t-s$.
\end{remark}

In the sequel we are defining a stochastic difference equations, which express the conditions of Proposition \ref{consistency}:
\begin{definition}
Let $H$ be a forward configuration Hilbert space. We call the following system of stochastic difference equations
\begin{align}\label{sdiffe}
X_{t+1} - X_t & = \beta_t + \sum_{i=1}^d \gamma^i_t \Delta \varepsilon_t^i \, ,\\ \nonumber
\theta_{t+1} - \theta_t & = S_1 \theta_t - \theta_t  + \alpha(t,X_t,\theta_t) + \sum_{i=1}^d \sigma^i(t,X_t,\theta_t) \, \Delta \varepsilon^i_t \, \\ \nonumber
X_0 & \in \mathbb{R}^n \, , \; \theta_0 \in H \, ,
\end{align}
for $ t \geq 0 $ and maps 
\begin{align*}
\alpha &: \mathbb{N} \times \mathbb{R}^n \times H \to H \\
\sigma^i &: \mathbb{N} \times \mathbb{R}^n \times H \to H 
\end{align*}
a \emph{term structure equation for forward characteristics with vector fields $ (\alpha,\sigma) $, initial term structure $ \theta_0 $ and initial value $ X_0$}
\begin{itemize}
\item if $\varepsilon$ is a stochastic process taking values in $\mathbb R^d$ and $\beta$ and $\gamma^i$, $i = 1, \ldots, d$, are adapted stochastic processes with values in $\mathbb R^n$,
\item if, for $ s \geq 0 $, the following \emph{consistency condition} holds
\begin{equation}\label{X-char-equation}
\kappa^{\varepsilon}_s(\langle u, \gamma^._s \rangle ) + \im \langle u , \beta_s \rangle = \theta_s(u,0) \, ,
\end{equation}
\item and if, for $ s,x \geq 0 $ and $ (X,\theta) \in \mathbb{R}^n \times \Theta^n$, the following \emph{drift condition} is satisfied
\begin{align} \label{drift-condition2}
\kappa^{\varepsilon}_s(- \im \sum_{k=0}^{x} \sigma^.(s,X,\theta)(u,k) +\langle u,& \gamma^._s \rangle)
 = 
 \kappa^{\varepsilon}_s(\langle u, \gamma^._s \rangle ) - \sum_{k=0}^{x} \alpha(s,X,\theta)(u,k) \, . 
\end{align}
\end{itemize}
\end{definition}

\begin{remark}
Notice that the above stochastic difference equation is not a difference equation in the strong sense of the word as we require the conditions \eqref{X-char-equation} and \eqref{drift-condition2} additionally. Both conditions stem from Proposition \ref{consistency}, more precisely relations~\eqref{process-char-Xdecomposed} and ~\eqref{forward-char-Xdecomposed} respectively. The processes $\beta$ and $\gamma$ are hence determined from the short end of $\theta$.
\end{remark}

\begin{remark}
The introduction of the Musiela parametrization means in fact for the vector fields an additional shift by $1$ to the left. By abuse of notation we use the same letters as in Remark \ref{predictability-remark}, however, the evaluation at $ k=0 $ of the vector field $ \alpha(t,X,\theta)(u,0) $ corresponds to the evaluation at $ 1 $ in original coordinates. 
\end{remark}

\begin{remark}
We do not assume that $ \theta_t \in \Theta^n $, even though the interpretation of the stochastic difference equation might get lost at some point in time.
\end{remark}

\begin{remark}
If the vector fields $ \alpha $ and $ \sigma^i $ do not depend on $ X $ one can consider the stochastic difference equation for $ \theta $ on its own and construct $ X $ a posteriori. Usually one considers the $ \sigma^i $ a priori given and $ \alpha $ subject to a drift condition (which then also expresses the dependencies between $ X$ and $ \theta $ dynamics). We do not take this point of view, but rather choose vector fields $ \alpha $ and $ \sigma^i $ such that the drift condition is satisfied. So also $ \sigma^i$ might contain information on correlation or dependencies in certain parametric models, see Section \ref{fdr}.
\end{remark}
\begin{theorem}
Consider a term structure equation for forward characteristics \eqref{sdiffe} for initial values $ X_0 \in \mathbb R^n $ and $ \theta_0 \in \Theta^n $, assume that the $ \beta $ and $ \gamma $ are specified through \eqref{X-char-equation}, then the process $ X $ together with
\[
\eta_s(u,t):=\theta_s(u,t-s)
\]
for $ 0 \leq s \leq t $ is a process $X$ together with its forward characteristics $ \eta $.
\end{theorem}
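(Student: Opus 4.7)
The plan is to apply Proposition \ref{consistency} directly to the process $X$ given by \eqref{sdiffe} and the family $\eta_s(u,t) := \theta_s(u,t-s)$. Three items must be verified: (a) $\eta$ admits a decomposition of the form \eqref{decomposition} with respect to the same driving process $\varepsilon$; (b) the short-end condition \eqref{short-end-condition} holds; (c) the drift condition \eqref{drift-condition} holds. All three are algebraic manipulations, because Proposition \ref{consistency} has already encapsulated the probabilistic content.

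For (a) the increments of $\eta$ are read off directly from the second line of \eqref{sdiffe}: using the defining property $(S_1 \theta_s)(u,t-s-1) = \theta_s(u,t-s)$ of the shift semigroup, a one-step calculation gives, for $0 \leq s < t$,
\[
\eta_{s+1}(u,t) - \eta_s(u,t) = \alpha(s,X_s,\theta_s)(u,t-s-1) + \sum_{i=1}^d \sigma^i(s,X_s,\theta_s)(u,t-s-1)\, \Delta\varepsilon^i_s.
\]
This identifies the decomposition coefficients as $\alpha_s(u,k) = \alpha(s,X_s,\theta_s)(u,k-s-1)$ and $\sigma^i_s(u,k) = \sigma^i(s,X_s,\theta_s)(u,k-s-1)$ for $s < k$, while the boundary values $\alpha_t(u,t)$ and $\sigma^i_t(u,t)$ are set to zero, which is permitted by Remark \ref{predictability-remark} since they do not enter the increment.

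For (b) observe that $X$ has the form \eqref{decomposition-X} by construction, so \eqref{process-char-Xdecomposed} gives $\kappa^X_s(u) = \kappa^{\varepsilon}_s(\langle u,\gamma^._s\rangle) + \im\langle u,\beta_s\rangle$, which equals $\theta_s(u,0) = \eta_s(u,s)$ by the consistency requirement \eqref{X-char-equation}. For (c) one reindexes via $j = k-s-1$ to obtain $\sum_{k=s}^{t-1} \alpha_s(u,k) = \sum_{j=0}^{t-s-2} \alpha(s,X_s,\theta_s)(u,j)$ (the $k=s$ term contributing zero) and analogously for $\sigma^._s$. Inserting \eqref{process-char-Xdecomposed} on the left and \eqref{forward-char-Xdecomposed} on the right of \eqref{drift-condition}, the common summand $\im\langle u,\beta_s\rangle$ cancels, and what remains is precisely \eqref{drift-condition2} evaluated at $x = t-s-2$, which is built into the definition of the term structure equation. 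Proposition \ref{consistency} then yields that $\eta$ is the forward characteristic process of $X$.

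The only real difficulty is notational rather than conceptual: one must track the Musiela time-to-maturity shift and the off-by-one convention surrounding Remark \ref{predictability-remark} carefully, since the sum in \eqref{drift-condition} runs over calendar maturities $k \in \{s,\ldots,t-1\}$ whereas the sum in \eqref{drift-condition2} runs over times to maturity $k \in \{0,\ldots,x\}$. Keeping these two conventions aligned is the only bookkeeping task of the proof.
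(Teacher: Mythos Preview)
Your proof is correct and takes exactly the approach the paper intends: the paper's own proof is the single sentence ``The proof is a direct consequence of Proposition \ref{consistency},'' and you have simply unpacked that sentence by verifying the decomposition, the short-end condition, and the drift condition explicitly. Your bookkeeping of the Musiela shift and the index conventions around Remark \ref{predictability-remark} is accurate.
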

\begin{proof}
The proof is a direct consequence of Proposition \ref{consistency}.
\end{proof}
\begin{remark}
Notice that we do not need to assume that $ \theta \in \Theta^n $, since this follows by the martingale condition directly. In other words: existence of solutions of the equation for \emph{all times} leads to stochastic invariance.
\end{remark}
\begin{remark}
The different possible choices for $ \beta $ and $ \gamma $ correspond to different dependence structures between $ X $ and $ \varepsilon $. Therefore uniqueness is not only not to be expected but also not desirable in the general case.
\end{remark}


\section{Finite dimensional Realizations}\label{fdr}


The theory of finite dimensional realizations, in contrast to its continuous time counterpart, is also substantially simpler. This is due to the fact that as soon as we are given a particular stochastic difference equation affine subspaces appear by the very nature of the difference equation. Hence stochastic invariance means immediately that affine subspaces lie inside the invariant sub-manifold. We shall also see that Hull-White extensions of affine models appear to be a natural class of examples of finite dimensional realizations. 

We start first with a basic consideration:
\begin{proposition}
Let $H$ be a Hilbert space and consider a stochastic difference equation with driving process $ \varepsilon $ of the type
\begin{equation}
\theta_{t+1} - \theta_t = S_1 \theta_t - \theta_t + \alpha(t,\theta_t) + \sum_{i=1}^d \sigma^i(t,\theta_t) \Delta \varepsilon^i_t \, , \; \theta_0 \in H,
\end{equation}\label{sdiffe-general}
with vector fields $ \alpha, \sigma^1,\ldots,\sigma^d : \mathbb{N} \times H \to H $. Assume that the support of each process increment $ \Delta \varepsilon_t $, $t \geq 0$, is full, i.e.~$\mathbb{R}^d$. 
Let $M \subset H $ be a $ k$-dimensional sub-manifold of $H$. The manifold $M$ is left invariant by the solutions of \eqref{sdiffe-general} starting in $M$, if and only if for all $ \theta \in M $ and $ \lambda \in \mathbb{R}^d $ it holds that
\[
\theta \mapsto S_1 \theta + \alpha(t,\theta) + \sum_{i=1}^d \sigma^i(t,\theta) \lambda^i \, .
\]
defines a map from $ M $ to $ M $. 

In particular $ M $ contains affine sub-spaces and along those subspaces
\[
\sigma^i(t,\theta) \in T_{S_1 \theta + \alpha(t,\theta)+\sum_{i=1}^d \sigma^i(t,\theta) \lambda^i} M
\]
for $ i = 1,\ldots,m $, $ \theta \in M $ and all $ \lambda \in \mathbb{R}^d$ holds true.
\end{proposition}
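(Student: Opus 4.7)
My plan is to recast the recursion as a one-step deterministic update
\[
\theta_{t+1} = F_t(\theta_t, \Delta\varepsilon_t), \qquad F_t(\theta,\lambda) := S_1\theta + \alpha(t,\theta) + \sum_{i=1}^d \sigma^i(t,\theta) \lambda^i,
\]
so that invariance of $M$ becomes a deterministic question about the image of $F_t$. The ``if'' direction is then immediate: if $F_t(\cdot,\lambda)$ sends $M$ into $M$ for every $t$ and every $\lambda \in \RR^d$, an induction on $t$ shows that every realization of the difference equation that starts in $M$ stays in $M$.

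For the ``only if'' direction I would fix $\theta \in M$ and $t \geq 0$, initialize the recursion at $\theta$ at time $t$, and use invariance to conclude $F_t(\theta, \Delta\varepsilon_t) \in M$ almost surely. Since $\lambda \mapsto F_t(\theta,\lambda)$ is affine and hence continuous, and $\Delta\varepsilon_t$ has full support in $\RR^d$, the image $F_t(\theta,\RR^d)$ lies in the closure of $M$; passing to a local chart of $M$ centred at $F_t(\theta,0) \in M$ and using continuity then upgrades this to $F_t(\theta,\lambda) \in M$ for $\lambda$ in a neighbourhood of $0$, and iterating the local argument along a chain of charts covers all of $\RR^d$.

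For the ``in particular'' statement the work is essentially already done. The set
\[
A_{t,\theta} := \{ F_t(\theta,\lambda) : \lambda \in \RR^d \} = S_1\theta + \alpha(t,\theta) + \mathrm{span}(\sigma^1(t,\theta),\ldots,\sigma^d(t,\theta))
\]
is visibly an affine subspace of $H$, and by the previous step $A_{t,\theta} \subset M$. Differentiating the smooth curve $\lambda \mapsto F_t(\theta,\lambda)$ in the direction $e^i$ at an arbitrary $\lambda$ places its velocity $\sigma^i(t,\theta)$ into $T_{F_t(\theta,\lambda)} M$, as asserted. The hard part will be precisely the passage from ``almost-sure containment under a full-support distribution'' to ``pointwise containment'': since a general submanifold need not be closed in $H$, this requires a genuine (if routine) local chart argument to push the containment through, after which everything else is a direct reading of the equation.
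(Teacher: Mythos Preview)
Your approach is the same as the paper's, which dispatches the whole thing in two sentences: invariance plus full support gives the self-map property, and differentiation in $\lambda$ gives the tangent condition. Your write-up simply supplies the details the paper suppresses.

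There is one small circularity in your ``only if'' direction: you write ``passing to a local chart of $M$ centred at $F_t(\theta,0) \in M$'', but at that stage you have only established $F_t(\theta,0)\in\overline{M}$, not $F_t(\theta,0)\in M$, so you cannot yet center a chart there. The repair is routine: pick any $\lambda_0$ in the (nonempty, dense) set $\{\lambda: F_t(\theta,\lambda)\in M\}$, center the chart at $F_t(\theta,\lambda_0)$, and use that a submanifold is locally closed in $H$ to conclude that the dense set $\{\lambda: F_t(\theta,\lambda)\in M\}$ is in fact open; since $\RR^d$ is connected and this set is nonempty, open, and (by the same local-closedness argument applied at limit points) closed, it is all of $\RR^d$. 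With this adjustment your argument and the ``in particular'' paragraph are correct as written.
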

\begin{proof}
Applying invariance of $M$, the conditions on the increments and on their respective supports yields the assertion on the self map immediately. Taking derivatives with respect to $ \lambda $ yields the second assertion.
\end{proof}

At this point we do not investigate further the precise structure of such manifolds, which might be less interesting a task due to discrete time (compare \cite{ft04} in the continuous time setting, and the references therein). We just remark that those finite dimensional sub-manifolds, if they exist, are important since they constitute parametrized families of functions tailor-made for the model to calibrate initial forward configurations.

We know many examples of difference equations admitting finite dimensional invariant sub-manifolds in the realm of term structure equations for forward characteristics, for instance real components of affine stochastic volatility models:
\begin{definition}
An affine process $ (X,Y) $ with state space $ \mathbb{R}^n \times C \subset \mathbb{R}^{m+n} $, for a proper convex cone $ C $ is called \emph{affine stochastic volatility model} if
\begin{equation}\label{affine property-homogeneous}
\Econd{\exp( \langle u, X_t\rangle  + \langle v, Y_t\rangle)}{\mathcal{F}_s} = \exp \big(\phi(u,v,t-s) + \langle u, X_s \rangle + \langle \psi_C(u,v,t-s), Y_s \rangle  \big)
\end{equation}
for $ 0 \leq s \leq t $ with given deterministic function $ \phi,\psi_C^j : \mathcal{U} \times \mathbb{N} \to \mathbb{C} $, for $ j = 1,\ldots, m$, holds true. For convenience we drop the initial values $ (x,y) $ in the notation.
\end{definition}

\begin{remark}
Notice that those models also include discretizations of continuous time stochastic volatility models as long as the affine structure is preserved.
\end{remark}

The Hull-White extension of Proposition \ref{hull-white-extension} certainly applies to affine stochastic volatility models. It is a special feature of affine stochastic volatility models that the process $ Y $ is a Markov process in its own filtration. Therefore we can try to perform Hull-White extensions without changing the process $Y$ but only changing process characteristics for $X$. From a geometric point of view this leads to term structure equations for forward characteristics together with a foliation of finite dimensional sub-manifolds.
\begin{lemma}
Let $(X,Y)$ be an affine stochastic volatility model, then the cone-valued component $Y$ is a Markov process in its own filtration.
\end{lemma}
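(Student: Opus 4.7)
The key observation is that the affine property for the pair $(X,Y)$ has a very special ``block-triangular'' form: the coefficient of $X_s$ on the right-hand side of \eqref{affine property-homogeneous} is simply $u$, not a function of $(u,v,t-s)$. This means that if we set $u=0$, the dependence on $X_s$ disappears entirely.

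Concretely, the plan is the following. Setting $u=0$ in \eqref{affine property-homogeneous} gives, for any $0 \le s \le t$ and any $v$ in the relevant strip (in particular for all purely imaginary $v = \im w$, since $|\exp(\langle \im w, y\rangle)|=1$ so such $v$ lie in $\cU_1 \subset \cU$),
\[
\Econd{\exp(\langle v, Y_t\rangle)}{\mathcal{F}_s} \;=\; \exp\bigl(\phi(0,v,t-s) + \langle \psi_C(0,v,t-s), Y_s\rangle\bigr).
\]
The right-hand side is a deterministic function of $Y_s$ alone.

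Now let $\mathcal{F}^Y_s := \sigma(Y_r : r \le s) \subset \mathcal{F}_s$. By the tower property,
\[
\Econd{\exp(\langle v, Y_t\rangle)}{\mathcal{F}^Y_s}
= \Econd{\Econd{\exp(\langle v, Y_t\rangle)}{\mathcal{F}_s}}{\mathcal{F}^Y_s}
= \exp\bigl(\phi(0,v,t-s) + \langle \psi_C(0,v,t-s), Y_s\rangle\bigr),
\]
where the last equality uses that the inner conditional expectation is already $\sigma(Y_s)$-measurable, hence a fortiori $\mathcal{F}^Y_s$-measurable. In particular this conditional expectation depends on the history of $Y$ only through $Y_s$, i.e.\ it equals $\Econd{\exp(\langle v, Y_t\rangle)}{Y_s}$.

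Taking $v = \im w$ for $w \in \mathbb{R}^m$ we obtain that the conditional characteristic function of $Y_t$ given $\mathcal{F}^Y_s$ coincides with the conditional characteristic function given $Y_s$ and is a measurable function of $Y_s$. A standard monotone class / Stone--Weierstrass argument then upgrades this to $\Econd{f(Y_t)}{\mathcal{F}^Y_s} = \Econd{f(Y_t)}{Y_s}$ for every bounded measurable $f$, which is exactly the Markov property of $Y$ in its own filtration. The only point requiring any care is justifying that purely imaginary exponents are admissible, but this is immediate from the very definition of $\cU$ recalled in Definition~\ref{affine-process}, so no real obstacle arises.
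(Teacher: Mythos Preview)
Your proof is correct and follows exactly the same approach as the paper's: set $u=0$ in \eqref{affine property-homogeneous} and observe that the right-hand side depends only on $Y_s$. The paper's proof consists of a single sentence to this effect, while you have spelled out the standard details (tower property to pass to $\mathcal{F}^Y_s$, monotone class argument to go from characteristic functions to bounded measurables, and the observation that purely imaginary $v$ lie in $\mathcal{U}$).
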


\begin{proof}
If we set $ u = 0 $ in equation \eqref{affine property-homogeneous} we see immediately that the right hand side only depends on $ Y $, which proves the Markov property.
\end{proof}

\begin{proposition}\label{hull-white-extension-stochvol}
Let $(X,Y)$ be an affine stochastic volatility model. Then for every initial value $ x \in \mathbb{R}^n $, $ y \in C$ and every initial configuration  $ \nu_0 \in I(x,y) =:I(y) $, which is defined by an cumulant function $ \mu \in \operatorname{Inc}^{\mathbb R^n} $ (i.e.~whose effect only acts on the first $n$ variables but not at all on $Y$), there exists a stochastic process $ (\widetilde{X},Y) $ starting at $ (x,y) \in \mathbb{R}^n \times C = D $ with characterizing functions $ \widetilde{F} $, $R$ and $ \psi $, in the sense that  
\[
\sum_{k=s}^{t-1} \widetilde \eta_s(u,v,k) = \widetilde \phi(\im u,\im v,s,t) + \langle \psi_C(\im u, \im v, t-s) - \im v, Y_s \rangle  
\]
for $(u,v) \in \mathbb{R}^{n+m}$, $0 \leq s \leq t$, whose initial forward characteristic $\widetilde \eta_0 $ equals $ \nu_0 $. This process is called the \emph{Hull-White extension $ (\widetilde{X},Y) $ of $ (X,Y) $ for a given initial configuration $ \nu_0 $}.
\end{proposition}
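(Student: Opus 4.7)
The plan is to imitate the construction in Proposition \ref{hull-white-extension}, performing the one-step update jointly for $(\widetilde{X},Y)$ and adding independent jumps only to the $\widetilde{X}$-coordinate. Given $\nu_0 \in I(y)$, pick $\mu \in \operatorname{Inc}^{\mathbb{R}^n}$ generating it, and on an enlarged probability space let $(\Delta J_t)_{t \geq 0}$ be independent $\mathbb{R}^n$-valued increments with $\EE[\exp(\langle u, \Delta J_t\rangle)] = \exp(\mu(u,t))$, independent of the base affine model. I would then define $(\widetilde{X},Y)$ recursively: conditionally on $(\widetilde{X}_t,Y_t)$, draw a one-step sample $(X_1^{(\widetilde{X}_t,Y_t)}, Y_1^{(\widetilde{X}_t,Y_t)})$ of the underlying affine process, independent of $\mathcal{F}_t$ and of $\Delta J_t$, and set
\[
\widetilde{X}_{t+1} := X_1^{(\widetilde{X}_t,Y_t)} + \Delta J_t, \qquad Y_{t+1} := Y_1^{(\widetilde{X}_t,Y_t)}.
\]
The lemma just established ensures that $Y$ retains its original Markov dynamics in its own filtration, so only the $X$-component has been genuinely modified.

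Next I would verify the affine form of the conditional characteristic function by induction on $t-s$. A single conditioning step yields
\[
\Econd{\exp(\langle u,\widetilde{X}_{t+1}\rangle + \langle v, Y_{t+1}\rangle)}{\mathcal{F}_t} = \exp\bigl(\phi(u,v,1) + \mu(u,t) + \langle u, \widetilde{X}_t\rangle + \langle \psi_C(u,v,1), Y_t\rangle\bigr),
\]
where the crucial feature of an affine stochastic volatility model is that the coefficient of $\widetilde{X}_t$ is exactly $u$, not some flowed $\psi^X(u,v,1)$. Iterating backwards to $\mathcal{F}_s$ via the tower property and the semiflow identities $\psi_C(u,\psi_C(u,v,1),1) = \psi_C(u,v,2)$ and $\phi(u,v,1) + \phi(u,\psi_C(u,v,1),1) = \phi(u,v,2)$ (forced by the original affine property via uniqueness), the $u$-parameter in the $X$-slot is preserved and the $\mu$-cumulants accumulate additively. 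This gives
\[
\Econd{\exp(\langle u,\widetilde{X}_t\rangle + \langle v, Y_t\rangle)}{\mathcal{F}_s} = \exp\bigl(\widetilde{\phi}(u,v,s,t) + \langle u, \widetilde{X}_s\rangle + \langle \psi_C(u,v,t-s), Y_s\rangle\bigr)
\]
with $\widetilde{\phi}(u,v,s,t) = \phi(u,v,t-s) + \sum_{k=s}^{t-1}\mu(u,k)$. Specialising $(u,v)$ to $(\im u,\im v)$, taking logarithms and first differences yields the claimed formula for $\sum_{k=s}^{t-1}\widetilde{\eta}_s(u,v,k)$; evaluating at $s=0$ recovers $\nu_0$ by the very definition of $I(y)$ in Definition \ref{lying-above}.

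The main subtlety (rather than a genuine obstacle) is justifying that adding jumps only to the $X$-coordinate preserves the affine representation throughout the iteration. This rests squarely on the defining property $\psi^X \equiv u$ of affine stochastic volatility models: if the $X$-coefficient were a non-trivial $\psi^X(u,v,\cdot)$, jumps added at earlier steps would be flowed through $\psi^X$ at later steps and the exponent would cease to be affine in $\widetilde{X}_s$ with the simple coefficient $u$. Because $\psi^X$ is the identity in $u$ and constant in time, the added cumulants decouple cleanly and the proof reduces almost verbatim to a parametrised version of Proposition \ref{hull-white-extension}, with $\widetilde{F}(u,v,t) = F(u,v) + \mu(u,t)$ playing the role of the modified vector field.
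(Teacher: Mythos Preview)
Your proposal is correct and follows exactly the route the paper intends: the paper's own proof is the single line ``Follows from Proposition~\ref{hull-white-extension}'', and you have faithfully unpacked that reference, making explicit the one structural point that justifies the specialization---namely that in an affine stochastic volatility model the $X$-component of $\psi$ is the identity in $u$, so the added cumulants $\mu(u,\cdot)$ are not flowed and accumulate additively to give $\widetilde{\phi}(u,v,s,t)=\phi(u,v,t-s)+\sum_{k=s}^{t-1}\mu(u,k)$. Your emphasis on why $\psi^X\equiv u$ is essential is apt and is precisely what makes the reduction to Proposition~\ref{hull-white-extension} immediate.
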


\begin{proof}
  Follows from Proposition \ref{hull-white-extension}.
\end{proof}

\begin{example} Let $(X,Y)$ be an affine stochastic volatility model, then the forward characteristic process of $ X $ is given through
\begin{equation} \label{forward-characteristics-affstochvol}
\sum_{k=s}^{t-1}\eta_s(u,k) = \phi (\im u,0,t-s) + \langle \psi_C( \im u,0,t-s) - \im u, Y_s \rangle
\end{equation}
for $ 0 \leq s \leq t $ and $ u \in \mathbb{R}^n $. Therefore we can define vector fields on the Hilbert space of forward configurations, namely
\[
\sigma^i(u,x)=\sigma^i(\theta)(u,x):= \psi^{i-n}_C( \im u,0,x+1)-\psi^{i-n}_C( \im u,0,x)
\]
for $ x \geq 0 $, $ u \in \mathbb{R}^n $ and $ \theta \in H $, $i=n+1,\ldots,n+m$. Choosing a driving process $ \varepsilon = (X,Y) $ (which need not necessarily be a martingale in all our considerations) we have a decomposition of $ \theta $ of the form
\begin{equation} \label{term-structure-equation-affstochvol}
\theta_{t+1}-\theta_t= S_1 \theta_t - \theta_t + \alpha(\theta_t) + \sum_{i=n+1}^{m+n} \sigma^i \Delta \varepsilon^i_t \, ,
\end{equation}
where $ \alpha $ is calculated according to the drift condition at the respective point $ \theta $. More precisely we can set $\alpha(\theta) = -S_1 \theta + \theta$. This can be seen by rewriting \eqref{forward-characteristics-affstochvol} in terms of $\theta$ which gives
\[
\sum_{k=0}^{t-s-1} \theta_s(u,k) = \phi(\im u,0,t-s) + \langle \psi_C(\im u,0, t-s) - \im u,Y_s\rangle
\]
for $ 0 \leq s \leq t $ and $ u \in \mathbb{R}^n $. Taking differences in $t$ and substituting $k=t-s$ we have 
\[
\theta_s(u,k) = \phi(\im u,0,k+1) - \phi(\im u,0,k) + \langle \psi_C(\im u,0, k+1)-\psi_C(\im u,0, k),Y_s\rangle
\]
for $ 0 \leq s $ and $ u \in \mathbb{R}^n $. It then follows that
\[
\theta_{t+1}-\theta_t = \sum_{i=n+1}^{m+n} \sigma^i \Delta Y^{i-n}_t 
\]
and hence \eqref{term-structure-equation-affstochvol} is satisfied with $\alpha(\theta) = -S_1 \theta + \theta$. Moreover the drift and consistency conditions are satisfied automatically since $\theta$ is defined as the forward characteristic process of $X$.
However, calibrating to an arbitrary initial term structure which does not apriori correspond to an affine stochastic volatility model requires a different choice of $\alpha$ and is not always possible. 
\end{example}


\section{Consistent Re-calibration Models}\label{crc}


The stochastic difference equation~\eqref{sdiffe} is due to its involved drift quite challenging. Fortunately the previous results yield a particularly simple method to solve a rich class of equations of type~\eqref{sdiffe}, namely models, whose one-step from $ t \mapsto t+1 $ is described by a time-inhomogeneous, affine stochastic volatility model (even with stochastically varying parameters). Since consistency conditions are fully expressed in one time step, concatenations of affine one-steps with different model parameters preserve consistency. However, one has to adapt the Hull-White extension after each time step. Notice also that concatenations of these models are in general not affine anymore, but still relatively easy to implement and calibrate:
\begin{definition}\label{crc-def}
Let $ \mathbf{a} $ be a parameter vector representing admissible parameters of an affine stochastic volatility model $ (X(\mathbf{a}),Y(\mathbf{a})) $ and consider an adapted process $ {(\mathbf{a}_t)}_{t \geq 0} $ taking values in the space of admissible parameters of the affine stochastic volatility model. 

Consider furthermore the set $ I(\mathbf{a}_t,y) $, which corresponds to the initial forward configurations $I(x,y)$ with admissible parameters $ \mathbf{a}_t$ from Definition \ref{lying-above}. For convenience we leave away the initial value $x$ (since it does not depend on $x$), but emphasize the dependence on the parameter vector $\mathbf{a}_t$ of the affine process. Hence the relation $ \theta \in I(\mathbf{a},y) $ means that there is at least one $ \mu \in \operatorname{Inc}^{\mathbb{R}^n} $ defining a Hull-White extension with initial forward characteristic $ \theta $ as in Proposition~\ref{hull-white-extension-stochvol}. Notice that this Hull-White extension is working with stochastic increments independent of $Y$.

We call a process $Z$ a \emph{consistent re-calibration model (CRC model)} if the system of equations
\begin{align}\label{concatenation}
Z^i_{t+1} - Z^i_t & = \Delta \varepsilon^i_t(\mathbf{a}_t) \, , \; i=1,\ldots,n \, ,\\ \nonumber
\theta_{t+1} - \theta_t & = S_1 \theta_t - \theta_t  + \alpha(\mathbf a_t) + \sum_{i=n+1}^{n+m} \sigma^i(\mathbf{a}_t) \, \Delta \varepsilon^i_t(\mathbf{a}_t) 
\end{align}
has a solution for some $ Z_0 \in \mathbb{R}^n$ and some $\theta_0 \in H $, for $ t \geq 0 $ in the set of forward characteristics.

We define the different coefficients and driving noises of the above equation: all processes are adapted and well defined on one stochastic basis. Let $ (\widetilde{X}(\mathbf a_t),Y(\mathbf a_t)) $ denote appropriate Hull-White extensions of a stochastic volatility model $(X(\mathbf a_t),Y(\mathbf a_t)) $ with parameters $\mathbf a_t$ in the sense of Proposition~\ref{hull-white-extension-stochvol} and with forward characteristic $ \theta_t $ at time $t$ (and, of course, starting at time $t$) on this stochastic basis. Hence in particular $ \theta_t \in I(\mathbf{a}_t,Y_t(\mathbf{a}_t))$, for $ t \geq 0$. We assume furthermore that 
$$ \widetilde{X}_{t+1}(\mathbf a_t)=\widetilde X_{t+1}(\mathbf a_{t+1}) \text{ and } Y_{t+1}(\mathbf a_t) = Y_{t+1}(\mathbf a_{t+1}) \, , $$ 
for $ t \geq 0 $, and that 
$$ \widetilde{X}_{t+1}(\mathbf a_t)- \widetilde X_{t}(\mathbf a_{t}) \text{ and } Y_{t+1}(\mathbf a_t) - Y_{t}(\mathbf a_{t}) \, , $$
is independent of 
$$ \widetilde{X}_{t}(\mathbf a_{t-1})-\widetilde X_{t-1}(\mathbf a_{t-1}) \text{ and } Y_{t}(\mathbf a_{t-1}) - Y_{t-1}(\mathbf a_{t-1}) \, , $$
given $ (\widetilde X_t(\mathbf a_t),Y_t(\mathbf a_t)) $, for $ t \geq 1 $.

For $t \geq 0$ let
\begin{align*}
\Delta \varepsilon_t(\mathbf{a}_t)&:= (\widetilde{X}_{t+1}(\mathbf{a}_t)-\widetilde{X}_t(\mathbf{a}_t),Y_{t+1}(\mathbf{a}_t)-Y_t(\mathbf{a}_t)), \\
\sigma^i(\mathbf{a}_t)(u,x)&:= \psi_C^{i-n,\mathbf{a}_t}(\im u,0,x+1)-\psi_C^{i-n,\mathbf{a}_t}(\im u,0,x) \, , 
\end{align*}
for $i = n+1, \ldots,n+m$, and 
\begin{align}\label{alpha}
& \alpha(\mathbf{a}_t)(u,x) :=  - \phi^{\mathbf{a}_t}(\im u,0,x+2)+2\phi^{\mathbf{a}_t}(\im u,0,x+1) - \phi^{\mathbf{a}_t}(\im u,0,x) + \\
& + \sum_{i=1}^{m}( - \psi_C^{i,\mathbf{a}_t}(\im u,0,x+2)+2\psi_C^{i,\mathbf{a}_t}(\im u,0,x+1) - \psi_C^{i,\mathbf{a}_t}(\im u,0,x)) Y^{i}_t(\mathbf{a}_t) \, .\nonumber
\end{align}
\end{definition}

\begin{remark}
A direct computation from Proposition~\ref{hull-white-extension-stochvol} would yield
\begin{align*}
& \alpha(\mathbf{a}_t)(u,x) :=  \widetilde{\phi}^{\mathbf{a}_t}(\im u,0,t+1,t+x+2) - \widetilde{\phi}^{\mathbf{a}_t}(\im u,0,t+1,t+x+1) - \\
& - \widetilde{\phi}^{\mathbf{a}_t}(\im u,0,t,t+x+2) + \widetilde{\phi}^{\mathbf{a}_t}(\im u,0,t,t+x+1) + \\
& + \sum_{i=1}^{m}( - \psi_C^{i,\mathbf{a}_t}(\im u,0,x+2)+2\psi_C^{i,\mathbf{a}_t}(\im u,0,x+1) - \psi_C^{i,\mathbf{a}_t}(\im u,0,x)) Y^{i}_t(\mathbf{a}_t) \, ,
\end{align*}
where $\widetilde{\phi}^{\mathbf{a}_t}$ solves the Riccati equation, started at $ t $, associated to some
$$ \widetilde{F}^{\mathbf{a}_t}(\theta_t)(u,v,s) =F^{\mathbf{a}_t}(u,v) +\mu(\theta_t)(u,s-t)  $$
derived from the fact that $ \theta_t \in I(\mathbf{a}_t,Y_t(\mathbf{a}_t))$. However, by
\[
\widetilde{\phi}^{\mathbf{a}_t}(u,v,t,t+x)=\phi^{\mathbf{a}_t}(u,v,x)+\sum_{k=0}^{x-1} \mu(u,t+k)
\]
we immediately obtain the above expression~\eqref{alpha} in Definition~\ref{crc-def}. The expression~\eqref{alpha} is remarkably simple, since only $ \phi $ and not $ \widetilde{\phi} $ appear therein. This is another justification for the use of $ (\widetilde{X},Y) $ as driving noises.
\end{remark}

\begin{remark}
Notice that the set $ I(\mathbf{a}_t,y) $ is independent of $x$ by \eqref{forward-characteristics-affstochvol}. Note that the drift term $ \alpha $ is calculated from the Hull-White-extension ``static'' affine stochastic volatility model provided in Proposition~\ref{hull-white-extension-stochvol}.
\end{remark}

\begin{remark}
The fact that process $Z$ of a CRC model has forward characteristics given by $ \theta $ follows from Proposition~\ref{hull-white-extension-stochvol} by induction: at each time $ t $ the next increment is just given by an ordinary Hull-White extension with parameters $ \mathbf{a}_t $ and initial configuration $ \theta_t $. By assumption $ \theta_{t+1} \in I(\mathbf{a}_{t+1},y) $ at $ t + 1 $ we know that another Hull-White extension for parameters $ \mathbf{a}_{t+1} $ and initial configuration $ \theta_{t+1} $ can be constructed.
\end{remark}

The following definition defines the set $ J(y,\theta) $ of parameters $ b $, for which -- given an initial value $ y $ -- the configuration $ \theta $ lies above.

\begin{definition}\label{J-set}
For a given affine stochastic volatility model $(X(\mathbf{a}),Y(\mathbf a))$ with $Y_0(\mathbf a)=y$ and forward characteristics $\theta$, the set $J(y,\theta)$ denotes the set of admissible parameters $ \mathbf{b} $ such that $ \theta \in I(\mathbf{b},y) $. 

Notice that $ J(Y_0(\mathbf a),\theta) $ contains at least $ \mathbf a $ if $ \theta \in I(\mathbf a,Y_0(\mathbf a)) $.
\end{definition}

\begin{theorem}\label{main_theorem}
Let $ (X(\mathbf a),Y(\mathbf a)) $ denote an affine stochastic volatility model with parameter vector $\mathbf a$. The previously introduced stochastic difference equation \eqref{concatenation} has solutions in law (defined on some possibly extended probability space) for an adapted process $ {(\mathbf{a}_t)}_{t \geq 0} $ taking values in admissible parameters if and only if
\begin{itemize}
\item the initial configuration $ \theta_0 $ lies above the affine stochastic volatility model with parameters $ \mathbf{a}_0 $, i.e.~$ \theta_0 \in I(\mathbf{a}_0,Y_0(\mathbf a_0)) $.
\item the parameter valued process satisfies $ \mathbf{a}_t \in J(Y_t(\mathbf a_{t-1}),\theta_t) $ for $ t \geq 1 $.
\end{itemize}
\end{theorem}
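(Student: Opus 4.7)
The plan is to prove both directions simultaneously by induction on $t$, viewing equation~\eqref{concatenation} as an iterative concatenation of one-step Hull-White extensions of (generally different) affine stochastic volatility models. The atomic building block is Proposition~\ref{hull-white-extension-stochvol}, which says that for a fixed parameter vector and a fixed initial value $(x,y)$, initial configurations $\nu_0 \in I(\mathbf{a},y)$ are in one-to-one correspondence with admissible Hull-White extensions. The equivalence in the theorem then amounts to saying that such a correspondence must be available at every time step, which is exactly the hypothesis $\mathbf{a}_t \in J(Y_t(\mathbf{a}_{t-1}),\theta_t)$ by Definition~\ref{J-set}.

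For sufficiency, assuming both conditions, I would construct the solution step by step. At $t=0$, since $\theta_0 \in I(\mathbf{a}_0,Y_0(\mathbf{a}_0))$, Proposition~\ref{hull-white-extension-stochvol} yields a Hull-White extension $(\widetilde X(\mathbf{a}_0),Y(\mathbf{a}_0))$ whose initial forward characteristic is exactly $\theta_0$; I read off $\Delta \varepsilon_0(\mathbf{a}_0)$ from this extension. In the induction step, $\mathbf{a}_{t+1} \in J(Y_{t+1}(\mathbf{a}_t),\theta_{t+1})$ says precisely that $\theta_{t+1} \in I(\mathbf{a}_{t+1},Y_{t+1}(\mathbf{a}_t))$, so Proposition~\ref{hull-white-extension-stochvol} applied with parameters $\mathbf{a}_{t+1}$ produces the next increment, sampled independently of the past given the current state on a sufficiently rich probability space, which realizes the independence and coupling requirements of Definition~\ref{crc-def}. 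It then remains to check that the $\theta_{t+1}$ produced by the dynamics in~\eqref{concatenation} coincides with the new initial forward characteristic of the freshly-constructed Hull-White extension; this is where the explicit form~\eqref{alpha} is used, together with the telescoping identity $\widetilde\phi^{\mathbf{a}_t}(u,v,t,t+x)=\phi^{\mathbf{a}_t}(u,v,x)+\sum_k \mu(u,t+k)$ from the remark after Definition~\ref{crc-def}, which is exactly what forces $\widetilde\phi$ to drop out in favour of $\phi$ and $\psi_C$.

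For necessity, assuming a solution $(Z,\theta)$ of~\eqref{concatenation} exists, I would exploit the uniqueness part of Proposition~\ref{consistency} at each one-step transition. The prescribed diffusion coefficients $\sigma^i(\mathbf{a}_t)$ and drift $\alpha(\mathbf{a}_t)$ in~\eqref{concatenation} pin down the joint process characteristic of the increment $(\Delta Z_t,\Delta Y_t)$ at time $t$ as that of the one-step Hull-White extension of the affine stochastic volatility model with parameters $\mathbf{a}_t$. Consequently $\theta_t$ must be of Hull-White form above that model with initial value $Y_t(\mathbf{a}_{t-1})$, i.e.~$\theta_t \in I(\mathbf{a}_t,Y_t(\mathbf{a}_{t-1}))$, which by Definition~\ref{J-set} is $\mathbf{a}_t \in J(Y_t(\mathbf{a}_{t-1}),\theta_t)$; the $t=0$ case gives $\theta_0 \in I(\mathbf{a}_0,Y_0(\mathbf{a}_0))$.

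The main obstacle I anticipate is not conceptual but careful bookkeeping. First, one must verify that the chain of single-step Hull-White extensions, each constructed with different parameters $\mathbf{a}_t$, can be glued on one common stochastic basis while respecting the compatibility $\widetilde X_{t+1}(\mathbf{a}_t)=\widetilde X_{t+1}(\mathbf{a}_{t+1})$, $Y_{t+1}(\mathbf{a}_t)=Y_{t+1}(\mathbf{a}_{t+1})$ and the conditional-independence hypothesis of Definition~\ref{crc-def}; this is a product-measure construction but it needs stating. Second, the algebraic identification of the $\theta_{t+1}$ generated by~\eqref{concatenation} with the forward characteristic of the new Hull-White extension must be carried out by a direct computation using~\eqref{forward-characteristics-affstochvol} together with~\eqref{alpha}; once this identification is in place, the short-end condition~\eqref{X-char-equation} and the drift condition~\eqref{drift-condition2} reduce to instances of the affine transform formula and require no separate verification.
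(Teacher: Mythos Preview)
Your proposal is correct and follows essentially the same approach as the paper: the paper's proof is literally the single phrase ``By induction,'' with the content delegated to the surrounding remarks (in particular the remark explaining that each step is an ordinary Hull-White extension via Proposition~\ref{hull-white-extension-stochvol} and the remark computing $\alpha$ via the telescoping identity for $\widetilde\phi$). Your write-up is a faithful and more detailed unpacking of exactly that induction, identifying the correct building blocks and the bookkeeping that the paper leaves implicit.
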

\begin{proof} By induction.
\end{proof}

\begin{remark}
CRC models models are concatenations of one step evolutions according to a Hull-White extension of an affine stochastic volatility process driven by an endogenously or exogenously given stochastic process $ (\mathbf{a}_t)_{t \geq 0} $. In other words: even though we are changing the parameters of the affine stochastic volatility model (which usually happens through re-calibration), we are still able to write a dynamics (using the technology of Hull-White extensions) which is consistent. From a numerical point of view, a well-chosen affine stochastic volatility model leads to stochastically well understood increments $\Delta \varepsilon$ which lead to lower complexity in simulation than pure HJM-type models. Furthermore, if the model parameters $\mathbf a$ do not change too quickly, pricing within a factor model on small time scales is possible. Hence CRC models are consistent in the long run and simultaneously incorporate the daily information appropriately. 
\end{remark}

It is remarkable that $ J(Y_t(\mathbf a_{t-1}),\theta_t) $ from Definition \ref{J-set} will often be quite a large set, even with non-empty interior, so no bad constraints are to be expected. Even though the admissible parameters change freely through the redundancy introduced by the Hull-White extension, we are able to buffer this effect.



\begin{algorithm}
The structure of an algorithm for simulation consequently looks as follows:
\begin{itemize}
\item Choose an initial term structure $ \theta_0 $, an initial vector of parameters $ \mathbf{a}_0 $, initial log-prices and variances $ X_0$, $Y_0 $ such that $ \theta_0 \in I(\mathbf{a}_0,Y_0) $.
\item Simulate one period of $ (X(\mathbf{a}_0),Y(\mathbf{a}_0)) $ with initial value $ X_0,Y_0 $ of the Hull-White extension with respect to initial forward configuration $ \theta_0 \in I(\mathbf{a}_0,Y_0) $.
\item The resulting configuration $ \theta_1 $ lies in $ I(\mathbf{a}_0,Y_1(\mathbf a_0)) $ by construction. Choose a random variable $\mathbf{a}_1$ such that $ \mathbf{a}_1 \in J(Y_1(\mathbf{a}_0),\theta_1)$ and continue by constructing a new Hull-White extension with respect to $ \mathbf{a}_1 $ for $ \theta_1 $.
\end{itemize}
\end{algorithm}

The result of Theorem~\ref{main_theorem} can be translated into the following time series situation: we have proved that there are stochastic processes $ {(\eta_t)}_{t \geq 0} $ of term structures of the form
\[
\eta_t = A_t + \sum_{i=1}^m B^i_t Y^i_t
\]
with respect to some factor driving process $ Y ={(Y_t)}_{t \geq 0} $, together with some underlying process $X$, to whom the term structure belongs in the sense of forward characteristics. In our construction there is an underlying affine stochastic volatility model which determines the construction. The coefficients $A$ and $B^i$ are stochastic and described by $ \widetilde \phi $ and $\psi^i_C$, for $i=1,\ldots,m$. The stochasticity of $B$ only depends on changes in the parameter vector $\mathbf a $, whereas the stochasticity in $A$ also depends on choosing the appropriate Hull-White extension. This process $\eta$ is describing sufficiently rich arbitrage free evolutions of term structures, which can be re-calibrated.

\begin{algorithm} Observations in this setting are given by a sufficiently long trajectory of $\eta$ and $X$, where $X$ allows to extract the time series of $Y$ (on a possibly coarser grid). A calibration algorithm could then look as follows:
\begin{itemize}
\item Choose a class of affine stochastic volatility models $ (X(\mathbf a),Y(\mathbf a)) $ parametrized by a parameter vector $ \mathbf a $.
\item Obtain the realized trajectory $ t \mapsto (X_t,\eta_t) $.
\item Extract (estimate) the realized trajectory $ t \mapsto Y_t $ with a non-parametric procedure.
\item Estimate from the time series of first differences of $\eta$ the (parametric) form of $ B^i $, which corresponds to determining $ \psi_C $, i.e.~determine all parameters of $\mathbf a$ appearing in $\psi_C$. 
\item Check whether $\eta_t$ still ``lies sufficiently above'', in order to justify the linear structure equation $\eta_t = A_t + \sum_{i=1}^m B^i_t Y^i_t$.
\item Determine $ A_t $ from the equation
\[
A_t = \eta_t - \sum_{i=1}^m B^i_t Y^i_t 
\]
and calibrate a full parameter vector $ \mathbf{a}_t $ from the term structure $\eta_t$, for $ t \geq 0 $.
\item Choose a model for $ t \mapsto \mathbf{a}_t $ obeying the consistency condition of Theorem \ref{main_theorem}.
\end{itemize}
\end{algorithm}

\begin{remark}
Instead of estimating with a two-step procedure model parameter processes and Hull-White extensions, we could also use Bayesian approaches, i.e.~filter model parameters for $ t \mapsto \mathbf{a}_t $. In contrast to classical calibration the previous calibration algorithm also has the ``Bayesian feature'' to build upon all information from the past.
\end{remark}

\begin{example}
This example is provided in continuous time for the reader's convenience but it can be transferred easily to discrete time by time discretization. Additionally it underlines that we can consider in our setting tangent affine models in the sense of \cite{cn12}, since we believe that the tangent model itself should already be as good as possible. In this example we describe the consistent re-calibration model based on the (tangent) Heston stochastic volatility model class from option pricing theory: Let $X$ be a continuous time Heston model for the log-price of an asset, i.e. 
\begin{align*}
 dX_t &= - \frac12 Y_t dt + \sqrt{Y_t} dW_t \\
 dY_t &= a(b-Y_t) dt +c \sqrt{Y_t} dB_t
\end{align*}
for $t \geq 0$. Here $W$ and $B$ are two Brownian motions with correlation $\rho \in [-1,1]$ and parameters $a,b,c \in \mathbb R$ satisfying the Feller condition $2ab \geq c^2$. As it is well-known, the parameter $b$ represents the long-run variance, $a$ is the rate at which the instantanous variance $Y$ approaches $b$ and $c$ represents the volatility of volatility.

We can derive the characterizing functions $F^{\mathbf a}$ and $R_C^{\mathbf a}$ (in continuous time!) of the Heston model $(X(\mathbf a), Y(\mathbf a))$, i.e.
\begin{align*}
F^{\mathbf a}(u,v) &= a b v \\
R_C^{\mathbf a}(u,v) &= \frac12 u^2 + \frac12 c^2 v^2 + c \rho u v - \frac12 u -(1+a) v
\end{align*}
for $(u,v) \in \mathcal U$ and the functions $\phi^{\mathbf a}$, $\psi_C^{\mathbf a}$ by solving the ODEs
\begin{align*}
\partial_t \phi^{\mathbf a} (u,v,t) &= F^{\mathbf a} (u ,\psi_C^{\mathbf a} (u,v,t) ) \\
\partial_t \psi_C^{\mathbf a} (u,v,t)  &= R_C^{\mathbf a} (u ,\psi_C^{\mathbf a} (u,v,t) )
\end{align*}
for $t \geq 0$ and with initial values $\phi^{\mathbf a }(u,v,0 ) = 0 $ and $\psi^{\mathbf a}_C(u,v,0 ) = v $. Unsurprisingly the Heston option prices in general do not fit all of today's options prices of different maturities well. We do not spell out the corresponding discrete time equations but we note that a continuous-time affine process observed at equidistant discrete times is a discrete time affine process.

First we explain the calibration procedure following the methods outlined in~\cite{ct:14}: given a trajectory (even discretely sampled!) of log prices $ X $, and of forward characteristics $ \eta $, we can infer the trajectory of $Y$ from the quadratic variation of $X$ and then infer the function $R$ and its correct parameter values $ t \mapsto (a(t),c(t),\rho(t)) $ of $R$ from the quadratic variation of $\eta$ by techniques of estimation of integrated quadratic variation of the continuous martingale part. Next we solve the defining linear equation for $A$
\[
A_t = \eta_t - \sum_{i=1}^m B^i_t Y^i_t 
\]
and obtain by the estimation of $B$ (which corresponds to $R$) and $ Y $ an expression for $ A $ along the observation time window. Both steps together provide us with a trajectory $ t \mapsto (a(t),b(t),c(t),\rho(t)) = \mathbf a_t $ and a choice of instantaneous Hull-White extensions $ t \mapsto \mu(\theta_t) $. This calibration therefore incorporates the full information of the time series of option prices and stock prices itself.

Finally we can select a model for $t \mapsto \mathbf a_t$ which satisfies $a_t \in J(Y_t(\mathbf a_{t-1}), \theta_{t})$.  This means, we choose $\mathbf a_t$ from those parameters $\mathbf a$, which have the property that the term structure $ \theta_t$ can be written as initial configuration $I(\mathbf b,Y_t(\mathbf a_{t-1}))$.

If instead of selecting a model based on data we want to simulate into the future, we choose an initial term structure $\theta_0$, initial log-price $X_0$ and initial variance $Y_0$ and choose a parameter vector $\mathbf a_0 = (a(0),b(0),c(0),\rho(0))$ such that $\theta_0 \in I(\mathbf a_0,Y_0)$. Then we add the Hull-White extension to the homogeneous model $(X(\mathbf a_0), Y(\mathbf a_0))$ and simulate this inhomogeneous model one step into the future. We can now choose a new admissible (random) parameter vector $\mathbf a_1  = (a(1),b(1),c(1),\rho(1))$ subject to the condition $\mathbf a_1 \in J(Y_1(\mathbf a_0), \theta_1)$, and continue as at the beginning of the paragraph resulting in an arbitrage-free evolution of forward characteristics calibrated to a time series.

\end{example}

\end{document}